\definecolor{dullmagenta}{rgb}{0.4,0,0.4}   
\definecolor{darkblue}{rgb}{0,0,0.4}
\newtheorem{theorem}{Theorem}[section]
\newtheorem{lemma}[theorem]{Lemma}
\theoremstyle{definition}
\newtheorem{definition}[theorem]{Definition}
\theoremstyle{remark}
\numberwithin{equation}{section}
\begin{document}

\title[Combinatorics of Factorizations]{Combinatorics of Matrix Factorizations and Integrable Systems}
\subjclass[2010]{37K32, 34M56, 39A10, 37K20}

\author{Anton Dzhamay}
\address{School of Mathematical Sciences\\ 
The University of Northern Colorado\\ 
Campus Box 122\\ 
501 20th Street\\ 
Greeley, CO 80639, USA}
\email{\href{mailto:adzham@unco.edu}{\texttt{adzham@unco.edu}}}

\begin{abstract} We study relations between the eigenvectors of rational
	matrix functions on the Riemann sphere. Our main result is that for a 
	subclass of functions that are products of two elementary 
	blocks it is possible to represent these  
	relations in a combinatorial--geometric way using a
	diagram of a cube. In this representation, vertices of the cube represent 
	eigenvectors, edges are labeled by differences of locations of zeroes and poles of 
	the determinant of our matrix function, and each face corresponds to a 
	particular choice of a coordinate system on the space of such functions.
	Moreover, for each face this labeling encodes, in a neat and efficient way,
	a generating function for the expressions of the remaining four eigenvectors that 
	label the opposing face of the cube in terms of the coordinates
	represented by the chosen face.	The main motivation behind this work is that when our matrix 
	is a Lax matrix of a discrete integrable system, such generating functions can be interpreted as
	Lagrangians of the system, and a choice of a particular face corresponds to a choice of the 
	direction of the motion.
	%
	%
	%
\end{abstract}

\maketitle

\section{Introductions} 
\label{sec:introductions}
Over the last 25 years a lot of research efforts have been directed towards the study of discrete 
analogues of integrable systems and, in particular, on how to adapt the existing methods and
techniques from the classical theory of differential completely integrable systems to the 
discrete case. The present paper is a small part of a larger project that aims to understand the Lagrangian 
structure of discrete integrable systems directly in terms of their Lax matrices and is 
motivated by work of Krichever and Phong \cite{KriPho:1998:SFITTOS}, 
who obtained expressions for a universal symplectic form and elementary generating
Hamiltonians on the space of Lax matrices for continuous completely integrable systems.
Note that a universal formula for a Lagrangian description of integrable systems
in the continuous case was obtained by Zakharov and Mikhailov 
\cite{ZahMih:1980:AVPFETAIBTIPM}.

For discrete completely integrable systems, the Lagrangian point of view 
has been developed in the classical papers by Veselov 
\cite{Ves:1988:ISWDTDO,Ves:1991:ILRFMP}  and Moser and Veselov \cite{MosVes:1991:DVSCISFMP}.
Recently a very promising approach to the study of Lagrangian structure of 
integrable lattice equations in terms of Lagrangian multiforms
has been proposed by Nijhoff and Lobb \cite{LobNij:2009:LMAMC}. This approach is related
to the notion of multidimensional consistency formulated by Bobenko and Suris \cite{BobSur:2002:ISOQ} and
independently by Nijhoff \cite{Nij:2002:LPFTALKS}. Although our approach is much more elementary and 
is based on the notion of the \emph{refactorization transformations}, as in the original work of Veselov and Moser,
the resulting combinatorial diagrams encoding such transformations 
are reminiscent of the multidimensional consistency approach, and it will be very interesting to see
if this is more than a pure coincidence. 

The appearance of refactorization transformations in the theory of discrete integrable systems is not 
very surprising. Indeed, a discrete analogue of the Lax Pair representation is the isospectral 
transformation $\tilde{\mathbf{L}}(z) = \mathbf{R}(z) \mathbf{L}(z) \mathbf{R}(z)^{-1}$, where 
$\mathbf{L}(z)$ is the Lax matrix of the system and $\mathbf{R}(z)$ is the evolution matrix that has to
be chosen in a special way dependent on $\mathbf{L}(z)$. When $\mathbf{L}(z)$ is a rational matrix
function with the fixed singularity structure, one can specify elementary evolution matrices $\mathbf{R}(z)$
using pairs of points $p_{\pm}$ on the spectral curve $\Gamma$ of $\mathbf{L}(z)$ and the eigenvectors
of $\mathbf{L}(z)$ at those points, and choosing those points to lie above the points of the determinantal
divisor of $\det \mathbf{L}(z)$ corresponds to factoring $\mathbf{L}(z) = \mathbf{L}_{1}(z) \mathbf{B}_{2}(z)$ and mapping
it to $\tilde{\mathbf{L}}(z) = \mathbf{B}_{2}(z) \mathbf{L}_{1}(z)$, i.e., by choosing $\mathbf{R}(z) = \mathbf{B}_{2}(z)$. 

In describing such refactorization transformations it is essential to choose a good coordinate 
system on the space of Lax matrices. The natural candidates for the coordinates are the eigenvectors
of $\mathbf{L}(z)$ and $\mathbf{L}(z)^{-1}$. However, for the refactorization transformations another 
important class of vectors consists of vectors defining elementary building blocks. The relationship
between these two sets of vectors is quite complicated, and in attempt to understand it we noticed a 
very elegant way to represent it using geometric diagrams described in the present paper. Although 
we restrict our attention to the quadratic case when $\mathbf{L}(z)$ is a product of two blocks and when 
there is no essential difference between 
the vectors of the additive and multiplicative representation, it is still quite interesting.  Indeed, such
refactorization transformations describe, for example, the change of polarization in the interaction
of soliton solutions of the matrix KdV equations \cite{GonVes:2004:YMAMS} and integrable 
discrete vector nonlinear Schr\"odinger equations \cite{PriAblBio:2006:ISTFTVNSEWNBC}. Further, in 
\cite{Dzh:2009:FORMF} we showed that for two-dimensional matrix functions the directional dynamics
of the eigenvectors is described by difference Painlev\'e-V equation.

We plan to use this tools developed in this paper to study the higher dimensional case of products of three and more elementary blocks 
and, in particular, the relation to Yang-Baxter maps, in a separate publication.

This paper is organized as follows. In the next section we briefly describe a particular class of Lax matrices that we consider,
their additive and multiplicative representations 
and the notion of an elementary divisor --- a building block for a multiplicative representation. In Section 3 we 
explain a visual representation of some linear equations involving elementary divisors and in Section 4 we use this 
visual representation to describe the relations between eigenvectors of a quadratic Lax matrix, and how to 
use the resulting cube diagram to generate the Lagrangian functions of the refactorization transformations. 


\section{Preliminaries} 
\label{sec:preliminaries}

The main goal of this section is to describe the space of Lax matrices,
 their additive and multiplicative representations and related coordinate
systems, and briefly review the
relationship between isomonodromic transformations and  discrete Painlev\'e equations,
see \cite{Dzh:2009:FORMF} for details.

\subsection{The space of Lax Matrices} 
\label{sub:the_space_of_lax_matrices}

We consider the space $\mathcal{L}$ of rational $m\times m$ matrix functions  
$\mathbf{L}(z)$ satisfying the following conditions. 
We assume that $\mathbf{L}(z)$ is regular, diagonalizable (and diagonalized) at some normalization point $z_{0}$
that we take to be $z_{0}=\infty$,
\begin{equation}
	\mathbf{L}_{0} = \lim_{z\to\infty} \mathbf{L}(z) = 
	\operatorname{diag}\{\rho_{1},\dots, \rho_{m}\}.
\end{equation}
We also assume that the singularity structure of $\mathbf{L}(z)$, i.e., points where $\mathbf{L}(z)$ has a pole or
becomes degenerate, is accurately represented by its determinant, and that the determinant is generic, i.e., it has 
only \emph{simple} zeroes and simple \emph{poles}. This amounts to requiring that $\mathbf{L}(z)$ is holomorphic except for  
\emph{simple} poles at the points $z_{1},\dots, z_{k}$, $\mathbf{M}(z) = \mathbf{L}(z)^{-1}$ is holomorphic except for 
 \emph{simple} poles at the points $\zeta_{1},\dots, \zeta_{k}$,  all $z_{i}$ and $\zeta_{j}$ 
are distinct, and 
\begin{equation}
	\det \mathbf{L}(z)=\rho_{1}\cdots \rho_{m}\frac{\prod_{\alpha}(z-\zeta_{\alpha})}{\prod_{k}(z-z_{k})}.
\end{equation}

These conditions mean that the residues $\mathbf{L}_{i}=\operatorname{res}_{z_{i}}\mathbf{L}(z)$
and $\mathbf{M}_{j}=-\operatorname{res}_{\zeta_{j}}\mathbf{M}(z)$, where the negative sign here is just for convenience,
are matrices of rank one. Using the $\dag$ symbol to indicate a row vector, 
we have:
\begin{align}
	\mathbf{L}(z) &= \mathbf{L}_{0} + \sum_{i=1}^{k} \frac{\mathbf{L}_{i}}{z-z_{i}},
	\qquad\text{where } \mathbf{L}_{0}=\operatorname{diag}\{\rho_{1},\dots,\rho_{m}\}\text{ and }
	\mathbf{L}_{i} = \mathbf{a}_{i} \mathbf{b}^{\dag}_{i},\label{L(z)-props-a}\\
	\det \mathbf{L}(z) &= \rho_{1}\cdots \rho_{m} 
	\frac{\prod_{i=1}^{k} (z-\zeta_{i})}{\prod_{j=1}^{k} (z-z_{j})},\label{L(z)-props-b}\\
	\mathbf{L}(z)^{-1}= \mathbf{M}(z) &= \mathbf{M}_{0} - \sum_{i=j}^{k}\frac{\mathbf{M}_{j}}{z - \zeta_{j}},
	\qquad\text{where } \mathbf{M}_{0}=\mathbf{L}_{0}^{-1},\qquad
	\mathbf{M}_{j} = \mathbf{c}_{j} \mathbf{d}^{\dag}_{j}\label{L(z)-props-c},
\end{align}
The above representations of $\mathbf{L}(z)$ and 
$\mathbf{M}(z)$ are called \emph{additive representations} and the vectors $\mathbf{a}_{i}$, $\mathbf{b}_{i}^{\dag}$ 
(resp. $\mathbf{c}_{i}$, $\mathbf{d}_{i}^{\dag}$) \emph{additive eigenvectors} of 
$\mathbf{L}(z)$ (resp. $\mathbf{M}(z)$). Note that these eigenvectors are also characterized by
$\mathbf{L}(\zeta_{\alpha})\mathbf{c}_{\alpha} = \mathbf{M}(z_{i}) \mathbf{a}_{i} = \mathbf{0}$ and 
$\mathbf{d}_\alpha^{\dag}\mathbf{L}(\zeta_{\alpha}) = \mathbf{b}_{i}^{\dag} \mathbf{M}_{i} = \mathbf{0}$.

Let $\mathcal{D} = \sum_{i} z_{i} - \sum_{i} \zeta_{i}$ be the divisor of the (determinant) of $\mathbf{L}(z)$
and denote the space of matrices $\mathbf{L}(z)$ satisfying conditions (\ref{L(z)-props-a})--(\ref{L(z)-props-c}) by 
$\mathcal{M}_{r}^{\mathcal{D}}$. Then eigenvectors give coordinates on this space, as described by the following 
lemma.

\begin{lemma}\label{lem:eigen-coords}
	Generically, the collection $\{\mathbf{a}_{k}, \mathbf{d}_{k}^{\dag}\}_{k=1}^{n}$ (or the collection
	$\{\mathbf{c}_{k}, \mathbf{b}_{k}^{\dag}\}_{k=1}^{n}$) is a coordinate system on the space $\mathcal{M}_{r}^{\mathcal{D}}$.
\end{lemma}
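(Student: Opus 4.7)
The plan is to exhibit an explicit reconstruction of $\mathbf{L}(z) \in \mathcal{M}_{r}^{\mathcal{D}}$ from the data $\{\mathbf{a}_{i}, \mathbf{d}_{i}^{\dag}\}$, which is equivalent to saying that the natural extraction map from $\mathcal{M}_{r}^{\mathcal{D}}$ to eigenvector data is generically a local isomorphism. First I would verify that the dimensions of source and target agree. In the additive representation~\eqref{L(z)-props-a} each rank-one residue $\mathbf{L}_{i} = \mathbf{a}_{i} \mathbf{b}_{i}^{\dag}$ carries $2m - 1$ free parameters, and the requirement that the $k$ zeros of $\det \mathbf{L}(z)$ be the prescribed points $\zeta_{\alpha}$ imposes $k$ conditions, so $\dim \mathcal{M}_{r}^{\mathcal{D}} = 2k(m - 1)$. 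On the other side, each $\mathbf{a}_{i}$ and each $\mathbf{d}_{i}^{\dag}$ is defined only up to scaling and contributes $m - 1$ moduli, again totaling $2k(m-1)$.

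For the reconstruction, fix scaling representatives of $\mathbf{a}_{i}$ and $\mathbf{d}_{\alpha}^{\dag}$; since $\mathbf{L}_{i} = \mathbf{a}_{i} \mathbf{b}_{i}^{\dag}$, it suffices to recover the row vectors $\mathbf{b}_{i}^{\dag}$. Substituting the additive expansion into the defining annihilation relation $\mathbf{d}_{\alpha}^{\dag} \mathbf{L}(\zeta_{\alpha}) = \mathbf{0}$ yields, for each $\alpha = 1,\dots,k$,
\begin{equation*}
\sum_{i=1}^{k} \frac{\mathbf{d}_{\alpha}^{\dag} \mathbf{a}_{i}}{\zeta_{\alpha} - z_{i}}\, \mathbf{b}_{i}^{\dag} \;=\; -\mathbf{d}_{\alpha}^{\dag} \mathbf{L}_{0},
\end{equation*}
which, read column by column, is a $k \times k$ linear system with coefficient matrix $C_{\alpha i} = (\mathbf{d}_{\alpha}^{\dag} \mathbf{a}_{i})/(\zeta_{\alpha} - z_{i})$, the Hadamard product of the pairing matrix $(\mathbf{d}_{\alpha}^{\dag} \mathbf{a}_{i})$ with the Cauchy matrix $(1/(\zeta_{\alpha} - z_{i}))$. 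On the open locus where $\det C \neq 0$, the $\mathbf{b}_{i}^{\dag}$ are uniquely determined and $\mathbf{L}(z)$ is reassembled.

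The main obstacle is to show that this locus is non-empty, and hence Zariski dense. The Cauchy factor is invertible because the $z_{i}$ and $\zeta_{\alpha}$ are all distinct by hypothesis, but the Hadamard product could in principle be singular. To exhibit one non-singular specialization I would choose biorthogonal representatives with $\mathbf{d}_{\alpha}^{\dag} \mathbf{a}_{i} = \delta_{\alpha i}$, which reduces $C$ to the diagonal matrix with non-zero entries $1/(\zeta_{\alpha} - z_{\alpha})$; since $\det C$ is polynomial in the entries of the data, non-vanishing at one point guarantees non-vanishing on a dense open subset, which is the content of \emph{generically} in the statement. Finally, one checks that the reconstructed $\mathbf{L}(z)$ really lies in $\mathcal{M}_{r}^{\mathcal{D}}$: by construction $\mathbf{d}_{\alpha}^{\dag} \mathbf{L}(\zeta_{\alpha}) = \mathbf{0}$, so $\det \mathbf{L}(\zeta_{\alpha}) = 0$ for each $\alpha$, and since the determinant has exactly $k$ zeros these exhaust the prescribed divisor. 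The argument for the alternative collection $\{\mathbf{c}_{i}, \mathbf{b}_{i}^{\dag}\}$ is formally dual, with the roles of $\mathbf{L}(z)$ and $\mathbf{M}(z) = \mathbf{L}(z)^{-1}$ interchanged.
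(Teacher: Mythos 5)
Your reconstruction is essentially the paper's own argument: the proof in the text writes down the very same linear system $\mathbf{d}_{i}^{\dag}\mathbf{L}(\zeta_{i})=\mathbf{0}$ and inverts the very same matrix $\left[\mathbf{d}_{i}^{\dag}\mathbf{a}_{k}/(z_{k}-\zeta_{i})\right]$ under the same genericity hypothesis (it additionally solves $\mathbf{M}(z_{k})\mathbf{a}_{k}=\mathbf{0}$ to recover the $\mathbf{c}_{i}$ explicitly, which you delegate to duality). Your dimension count and explicit witness for generic invertibility go beyond what the paper records and are welcome; the one caveat is that the biorthogonal specialization $\mathbf{d}_{\alpha}^{\dag}\mathbf{a}_{i}=\delta_{\alpha i}$ exists only when $k\le m$, whereas taking all pairings $\mathbf{d}_{\alpha}^{\dag}\mathbf{a}_{i}=1$ reduces $C$ to a pure Cauchy matrix and furnishes a witness for every $k$.
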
		

\begin{proof}
	Consider the equations $\mathbf{M}(z_{k}) \mathbf{a}_{k} = \mathbf{0}$ and 
	$\mathbf{d}_{i}^{\dag} \mathbf{L}(\zeta_{i}) = \mathbf{0}$:
	\begin{equation}
		\mathbf{L}_{0}^{-1} \mathbf{a}_{k} - \sum_{i=1}^{n} \mathbf{c}_{i} \frac{\mathbf{d}_{i}^{\dag} 
		\mathbf{a}_{k}}{z_{k} - \zeta_{i}} = \mathbf{0}, \qquad
		\mathbf{d}_{i}^{\dag}\mathbf{L}_{0}  + \sum_{k=1}^{n} \frac{\mathbf{d}_{i}^{\dag} 
		\mathbf{a}_{k}}{\zeta_{i} - z_{k}} \mathbf{b}_{k}^{\dag} = \mathbf{0}.			
	\end{equation}
	Then if the matrix $\left[ \frac{\mathbf{d}_{i}^{\dag} \mathbf{a}_{k}}{z_{k} - \zeta_{i}} \right]$ is invertible,

	\begin{equation}
		\mathbf{c}_{i} =  \mathbf{L}_{0}^{-1}\mathbf{a}_{k} \left[ \frac{\mathbf{d}_{i}^{\dag} 
		\mathbf{a}_{k}}{z_{k} - \zeta_{i}} \right]^{-1}, \quad
		\mathbf{b}_{k}^{\dag} = \left[ \frac{\mathbf{d}_{i}^{\dag} \mathbf{a}_{k}}{z_{k} - \zeta_{i}} \right]^{-1} 
		\mathbf{d}_{i}^{\dag} \mathbf{L}_{0}.
	\end{equation}	
\end{proof}


\subsection{Elementary Divisors and  Multiplicative Representations} 
\label{sub:elementary_divisors_and_multiplicative_representations}
To define a multiplicative representation of $\mathbf{L}(z)$ we first define its building blocks. These are 
rational matrix functions of the special form 
\begin{equation}
	\mathbf{B}(z) = \mathbf{I} + \frac{ \mathbf{G} }{ z - z_{0} }, \qquad\text{where $\mathbf{G} = \mathbf{f} \mathbf{g}^{\dag}$ 
	is a matrix of rank 1.}\label{eq:el-div}
\end{equation}
We call such matrices \emph{elementary divisors} \cite{Dzh:2009:FORMF}. If 
$\operatorname{tr}(\mathbf{G}) = \mathbf{g}^{\dag} \mathbf{f}\neq0$, we can 
consider instead of $\mathbf{G}$ a rank-one projector 
$\mathbf{P} = \mathbf{f} (\mathbf{g}^{\dag} \mathbf{f})^{-1} \mathbf{g}^{\dag}$. Then we can write
\begin{equation}
\mathbf{B}(z) = \mathbf{B}(z;z_{0},\zeta_{0}) = 
\mathbf{I} + \frac{ z_{0} - \zeta_{0}}{ z-z_{0}  } \frac{ \mathbf{f} \mathbf{g}^{\dag} }{ \mathbf{g}^{\dag} \mathbf{f} },\quad
\det \mathbf{B}(z) = \frac{ z - \zeta_{0} }{ z-z_{0} },\quad
\mathbf{B}(z)^{-1} = 
	\mathbf{I} + \frac{\zeta_{0} - z_{0}}{z-\zeta_{0}} \frac{ \mathbf{f} \mathbf{g}^{\dag} }{ \mathbf{g}^{\dag} \mathbf{f} }.
\end{equation}

If we now pair the zeroes and poles of $\det \mathbf{L}(z)$ in some way as
$(\zeta_{s},z_{s})$, then, for any such pair, there is a multiplicative component of
$\mathbf{L}(z)$ of the above form. There are two ways to think about such components --- 
we can look at  \emph{factors} or at \emph{divisors}, see also \cite{Bor:2004:ITLSDE}.

\begin{definition}\label{def:divisors} We say that elementary divisors $\mathbf{B}^{r}_{s}(z)$ (resp. 
	$\mathbf{B}^{l}_{s}(z)$) corresponding to pairs $(\zeta_{s},z_{s})$ are \emph{right} (resp. \emph{left})
	\emph{divisors} of $\mathbf{L}(z)$ if $\mathbf{L}(z) = \mathbf{L}^{r}_{s}(z) \mathbf{B}^{r}_{s}(z)$ 
	(resp. $\mathbf{L}(z) = \mathbf{B}^{l}_{s}(z)\mathbf{L}^{l}_{s}(z)$)
	where $\mathbf{L}^{r}_{s}(z)$ (resp. $\mathbf{L}^{l}_{s}(z)$) is regular at $z_{s}$.	
	Further, we say that elementary divisors 
		$\mathbf{B}_{s}(z)$ corresponding to pairs 
		$(\zeta_{s},z_{s})$ are the \emph{factors} of $\mathbf{L}(z)$ if 
		$\mathbf{L}(z) = \mathbf{L}_{0} \mathbf{B}_{1}(z) \cdots \mathbf{B}_{k}(z)$.
\end{definition}

In \cite{Dzh:2009:FORMF} we showed that left and right divisors can be written explicitly in terms of 
the eigenvectors of $\mathbf{L}(z)$,
	\begin{equation}
		\mathbf{B}^{r}_{s}(z) = \mathbf{I} + \frac{z_{s} - \zeta_{s}}{z - z_{s}} 
		\frac{\mathbf{c}_{s} \mathbf{b}_{s}^{\dag}}{\mathbf{b}_{s}^{\dag} \mathbf{c}_{s}},\qquad
		\mathbf{B}^{l}_{s}(z) = \mathbf{I} + \frac{z_{s} - \zeta_{s}}{z - z_{s}} 
		\frac{\mathbf{a}_{s} \mathbf{d}_{s}^{\dag}}{\mathbf{d}_{s}^{\dag} \mathbf{a}_{s}}, 
		\label{eq:left-right-divs}
	\end{equation}
	and so the coordinate systems described in Lemma~\ref{lem:eigen-coords} are just parameterizations
	of Lax matrices by left (resp.~right) divisors. When $\mathbf{L}(z)$ has only two factors, which is the
	quadratic case that we focus on in this paper, there is no essential difference between divisors and factors.


\subsection{Re-factorization transformations and d-$P_{V}$} 
\label{sub:re_factorization_transformations_and_d_p__v_}
We now consider the refactorization transformation in the quadratic case.  Let 
\begin{equation}
	\mathbf{L}(z) = \mathbf{L}_{0} \mathbf{B}_{1}(z) \mathbf{B}_{2}(z) = \mathbf{B}_{2}^{l}(z) \mathbf{L}_{0} \mathbf{B}_{1}^{r}(z) 
	= \mathbf{B}_{1}^{l}(z) \mathbf{L}_{0} \mathbf{B}_{2}^{r}(z) 
\end{equation}
Then we can consider either isospectral transformation with $\mathbf{R}(z) = \mathbf{B}_{1}^{r}(z)$,
\begin{alignat}{2}
	\mathbf{L}(z) &= \mathbf{B}_{2}^{l}(z) \mathbf{L}_{0} \mathbf{B}^{r}_{1}(z) &\mapsto 
	\tilde{\mathbf{L}}(z)&=\mathbf{B}^{r}_{1}(z) \mathbf{B}^{l}_{2}(z) \mathbf{L}_{0} = 
	\tilde{\mathbf{B}}_{2}^{l}(z) \mathbf{L}_{0} \tilde{\mathbf{B}}_{1}^{r}(z),
	\intertext{or isomonodromic transformations}
	\mathbf{L}(z) &= \mathbf{B}_{2}^{l}(z) \mathbf{L}_{0} \mathbf{B}^{r}_{1}(z) &\mapsto 
	\tilde{\mathbf{L}}(z)&=\mathbf{B}^{r}_{1}(z+1) \mathbf{B}^{l}_{2}(z) \mathbf{L}_{0} = 
	\tilde{\mathbf{B}}_{2}^{l}(z) \mathbf{L}_{0} \tilde{\mathbf{B}}_{1}^{r}(z).
\end{alignat}

When $m=2$, the resulting phase space is two-dimensional and it possible to introduce the so-called \emph{spectral
coordinates} $(\gamma,\pi)$ on the space of Lax matrices. Then, in the isomonodromic case, 
the spectral coordinates of $\mathbf{L}(z)$ and $\tilde{\mathbf{L}}(z)$ are related by the difference 
Painlev\'e V equation in the Sakai's classification \cite{Sak:2001:RSAWARSGPE}, see \cite{Dzh:2009:FORMF}:
\begin{align}
	\tilde{\gamma} + \gamma & = z_{2} + \zeta_{2} + \frac{\rho_{1}(k_{1} - z_{1} + \zeta_{2})}{\pi-\rho_{1}} + 
	\frac{\rho_{2}(k_{2} - z_{1} + \zeta_{2} + 1)}{\pi - \rho_{2}},\\
	\tilde{\pi}\pi & = \rho_{1} \rho_{2}  \frac{(\tilde{\gamma} - \tilde{z}_{2}) (\tilde{\gamma} - \tilde{\zeta}_{2})}{
	(\tilde{\gamma} - \tilde{z}_{1}) (\tilde{\gamma} - \tilde{\zeta}_{1})}.
\end{align} 

In \cite{Dzh:2008:OTLSOTDIAIT} we proved the following
\begin{theorem}\label{thm:Lagrangian}
	The equations of both the isospectral and isomonodromic dynamic can be written in the Lagrangian form with
			\begin{align}
				\mathcal{L}(\mathbf{X},\mathbf{Y},t) & = 
				(z_{2} - z_{1}(t)) \log(\mathbf{x}_{1}^{\dag}  \mathbf{x}_{2}) + 
				(z_{1}(t) - \zeta_{2}) \log(\mathbf{y}_{1}^{\dag} \mathbf{L}_{0}^{-1}\mathbf{x}_{2})\\
				&\qquad +(\zeta_{2} - \zeta_{1}(t))\log(\mathbf{y}_{1}^{\dag} \mathbf{L}_{0}^{-1} \mathbf{y}_{2}) +
				(\zeta_{1}(t) - z_{2}) \log(\mathbf{x}_{1}^{\dag} \mathbf{y}_{2}),			
			\end{align}	
			where $\mathbf{X}=(\mathbf{x}_{1},\mathbf{x}_{2}^{\dag})$ and $\mathbf{Y}=(\mathbf{y}_{1},\mathbf{y}_{2}^{\dag})$,
			in the isomonodromic case $z_{1}(t) = z_{1} - t$, $\zeta_{1}(t) = \zeta_{1} - t$, and in the isospectral case
			$z_{1}(t) = z_{1}$, $\zeta_{1}(t)=\zeta_{1}$ and $\mathcal{L}(X,Y)$ is time-independent.
\end{theorem}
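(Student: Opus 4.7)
The plan is to verify the theorem by computing the discrete Euler--Lagrange equations of $\mathcal{L}$ and matching them to the refactorization relations that define the isospectral and isomonodromic dynamics. In the standard discrete variational setup one considers the action $S = \sum_{n} \mathcal{L}(\mathbf{X}_{n}, \mathbf{X}_{n+1}, t_{n})$ with the chain identification $\mathbf{Y}_{n} = \mathbf{X}_{n+1}$, so that the equations of motion at step $n$ read
\[
\partial_{\mathbf{X}_{n}}\mathcal{L}(\mathbf{X}_{n-1},\mathbf{X}_{n},t_{n-1}) + \partial_{\mathbf{X}_{n}}\mathcal{L}(\mathbf{X}_{n},\mathbf{X}_{n+1},t_{n}) = 0,
\]
interpreted componentwise in each of the vector slots of $\mathbf{X}_{n}$.

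First I would fix the dictionary between the abstract variables $\mathbf{x}_{i},\mathbf{y}_{j}$ and the eigen/divisor data. In the quadratic case formulas (\ref{eq:el-div}) and (\ref{eq:left-right-divs}) parameterize each block $\mathbf{B}_{s}$ in $\mathbf{L}(z) = \mathbf{L}_{0}\mathbf{B}_{1}(z)\mathbf{B}_{2}(z)$ by a single pair of vectors, and likewise for the refactored blocks $\tilde{\mathbf{B}}_{s}$. The natural identification is to let $(\mathbf{x}_{1},\mathbf{x}_{2}^{\dag})$ carry the vector data before the refactorization and $(\mathbf{y}_{1},\mathbf{y}_{2}^{\dag})$ the data after, up to projective normalization. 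Translating the identity $\mathbf{B}_{1}^{r}(z)\mathbf{B}_{2}^{l}(z)\mathbf{L}_{0} = \tilde{\mathbf{B}}_{2}^{l}(z)\mathbf{L}_{0}\tilde{\mathbf{B}}_{1}^{r}(z)$ (and its shifted counterpart $\mathbf{B}_{1}^{r}(z+1)\mathbf{B}_{2}^{l}(z)\mathbf{L}_{0} = \tilde{\mathbf{B}}_{2}^{l}(z)\mathbf{L}_{0}\tilde{\mathbf{B}}_{1}^{r}(z)$ in the isomonodromic case) into scalar relations at the nodes $z_{s},\zeta_{s}$ reduces it, after invoking Lemma~\ref{lem:eigen-coords}, to equations whose essential data are precisely the four inner products $\mathbf{x}_{1}^{\dag}\mathbf{x}_{2}$, $\mathbf{y}_{1}^{\dag}\mathbf{L}_{0}^{-1}\mathbf{x}_{2}$, $\mathbf{y}_{1}^{\dag}\mathbf{L}_{0}^{-1}\mathbf{y}_{2}$, and $\mathbf{x}_{1}^{\dag}\mathbf{y}_{2}$ that appear as arguments of the logarithms in $\mathcal{L}$.

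Next I would carry out the differentiation. A typical variation produces
\[
\partial_{\mathbf{x}_{1}^{\dag}}\mathcal{L} = \frac{(z_{2} - z_{1}(t))\,\mathbf{x}_{2}}{\mathbf{x}_{1}^{\dag}\mathbf{x}_{2}} + \frac{(\zeta_{1}(t) - z_{2})\,\mathbf{y}_{2}}{\mathbf{x}_{1}^{\dag}\mathbf{y}_{2}},
\]
and analogous rational expressions for the remaining three slots $\mathbf{x}_{2},\mathbf{y}_{1}^{\dag},\mathbf{y}_{2}$. Adding the contribution from the neighbouring term $\mathcal{L}(\mathbf{Y},\mathbf{Z},t+1)$ in the action, the telescoping reorganization of the rational coefficients reassembles the refactorization relations identified in the first step. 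The identity $(z_{2}-z_{1}(t))+(z_{1}(t)-\zeta_{2})+(\zeta_{2}-\zeta_{1}(t))+(\zeta_{1}(t)-z_{2})=0$ ensures that $\mathcal{L}$ is invariant under independent rescalings of each of the four vectors, so that the Euler--Lagrange equations descend to the projective phase space on which the refactorization dynamics is defined.

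The main obstacle is the first, bookkeeping step: matching the concrete eigenvector pairs appearing in (\ref{eq:left-right-divs}) to the symbols $\mathbf{x}_{i},\mathbf{y}_{j}$ so that a single pair of discrete Euler--Lagrange equations encodes the full refactorization identity rather than a proper subset of it. Once this dictionary is pinned down, the logarithmic derivatives produce exactly the rational scalar relations one expects, and the isomonodromic case needs no separate structural argument: the shift by $1$ merely replaces $z_{1},\zeta_{1}$ by $z_{1}(t),\zeta_{1}(t)$ in the time-dependent coefficients of $\mathcal{L}$, without changing the form of the calculation.
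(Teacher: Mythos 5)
First, a point of orientation: the paper does not prove Theorem~\ref{thm:Lagrangian} itself --- it is quoted from \cite{Dzh:2008:OTLSOTDIAIT} --- and the content of Sections~3--4 is precisely a combinatorial re-derivation of this generating function. So the relevant comparison is between your variational sketch and the cube machinery of Lemma~\ref{lem:el-div-triples}(iv), the Face Lemma~\ref{lem:face-lemma}, and Theorem~\ref{thm:refactor-2-mult}. Your overall strategy (discrete Euler--Lagrange equations for $S=\sum_n\mathcal{L}(\mathbf{X}_n,\mathbf{X}_{n+1},t_n)$ reproducing the refactorization relations, in the Moser--Veselov style) is the right one and is essentially what the cited reference does. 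But as written there is a genuine gap: the two steps you defer are the entire mathematical content. You never fix the dictionary between $(\mathbf{x}_1,\mathbf{x}_2^{\dag},\mathbf{y}_1,\mathbf{y}_2^{\dag})$ and the eigenvector data --- the paper's Section~4 shows it is $\mathbf{x}_1^{\dag}\leftrightarrow\tilde{\mathbf{b}}_1^{\dag}$, $\mathbf{x}_2\leftrightarrow\tilde{\mathbf{a}}_2$, $\mathbf{y}_1^{\dag}\leftrightarrow\mathbf{b}_1^{\dag}$, $\mathbf{y}_2\leftrightarrow\mathbf{a}_2$ via the identifications $\tilde{\mathbf{a}}_2=\mathbf{c}_{\beta}$, $\tilde{\mathbf{b}}_1^{\dag}=\mathbf{d}_{\alpha}^{\dag}\mathbf{L}_0$ coming from (\ref{eq:left-right-divs}) --- and you never verify that the ``telescoping reorganization'' actually yields the four residue relations $\mathbf{d}_{\alpha}^{\dag}\mathbf{L}_0\mathbf{B}_2(z_1)\sim\mathbf{b}_1^{\dag}$, $\mathbf{B}_1(z_2)\mathbf{c}_{\beta}\sim\mathbf{L}_0^{-1}\mathbf{a}_2$, etc. This verification is nontrivial: each single relation $\mathbf{v}\sim\mathbf{B}(z^{*})\mathbf{w}$ is, by Lemma~\ref{lem:el-div-triples}(iv), the statement that $\mathbf{v}$ is the $\mathbf{q}^{\dag}$-gradient of a \emph{two}-term logarithmic potential with coefficients $z^{*}-z_0$ and $z_0-\zeta_0$; assembling four such triples into the four-term $\mathcal{L}$ with the correct coefficients and checking consistency at all vertices is exactly the computation carried out in the proof of the Face Lemma (the solution for $\alpha_4,\alpha_5,\beta_4,\beta_5$ and the cyclic condition $\alpha_1+\alpha_2+\beta_1+\beta_2=0$, which here reads $(z_2-z_1)+(z_1-\zeta_2)+(\zeta_2-\zeta_1)+(\zeta_1-z_2)=0$). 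Your proposal asserts that this works out but does not do it.

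Two smaller points. Your claim that the vanishing of the sum of all four coefficients makes $\mathcal{L}$ invariant under \emph{independent} rescalings of each vector is incorrect: rescaling $\mathbf{x}_1\mapsto c\,\mathbf{x}_1$ shifts $\mathcal{L}$ by $\bigl((z_2-z_1(t))+(\zeta_1(t)-z_2)\bigr)\log c=(\zeta_1-z_1)\log c$, which is nonzero in general; the Euler--Lagrange equations are still projectively well defined, but only because this shift is an additive constant, not because $\mathcal{L}$ is invariant. Second, in the isomonodromic case the shift $z\mapsto z+1$ in $\mathbf{B}_1^{r}(z+1)$ does change the pole locations of the refactored blocks (hence $z_1(t)=z_1-t$, $\zeta_1(t)=\zeta_1-t$), and one must check that the same four inner products still control the relations after the shift --- this is asserted but not argued. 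In short: correct architecture, but the proof consists of the two steps you label as obstacles, and those are exactly what the Face Lemma and the cube of Theorem~\ref{thm:refactor-2-mult} were built to supply.
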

Here $\mathbf{x}_{i}^{\dag}$, $\mathbf{x}_{j}$, $\mathbf{y}_{i}^{\dag}$, $\mathbf{y}_{j}$ are some eigenvectors of 
$\mathbf{L}(z)$ and $\tilde{\mathbf{L}}(z)$.

The main goal of the present paper is to give a combinatorial--geometric representation of this generating function.

\section{Linear Equations given by Elementary Divisors and their Geometric Representation} 
\label{sec:geometric_representation_of_some_linear_equations}

In studying the relationship between various eigenvectors of $\mathbf{L}(z)$ we mainly work with 
linear equations of the form $\mathbf{v} \sim \mathbf{B}(z^{*}) \mathbf{w}$ and 
$\mathbf{v}^{\dag} \sim \mathbf{w}^{\dag}\mathbf{B}(z^{*})$, where $\sim$ means that the two vectors are 
proportional and $\mathbf{B}(z^{*}; z_{0},\zeta_{0})$ is a matrix of the form~(\ref{eq:el-div}) evaluated at some point
$z^{*}$. In this section we give a diagrammatic representation of such equations using the language of
\emph{elementary triples}. 

\subsection{Basic Definitions} 
\label{sub:basic_definitions}
In what follows, let $\mathbf{V}$ be an $n$-dimensional complex vector space whose elements are 
column vectors, $\mathbf{V}\simeq \mathbb{C}^{n}$, and let $\mathbf{V}^{\dag}$ be its dual-space.
We think of elements of $\mathbf{V}^{\dag}$ as row-vectors and denote its elements 
as $\mathbf{q}^{\dag}\in \mathbf{V}^{\dag}\simeq (\mathbb{C}^{n})^{\dag}$. Our main definition is the following.

\begin{definition}\label{def:v-triple}
	Let $\lambda_{i}\in \mathbb{P}\mathbf{V}$, $i=1,\dots 3$, be one-dimensional linear subspaces of $\mathbf{V}$ and
	$\mu^{\dag}\in \mathbb{P}\mathbf{V}^{\dag}$ be a one-dimensional linear subspace of $\mathbf{V}^{\dag}$. We say that
	$(\lambda_{1},\lambda_{2},\lambda_{3})$ form a \emph{$\mu^{\dag}$-based triple}  with parameters
	$(\alpha_{1},\alpha_{2},\alpha_{3})$, $\alpha_{i}\in \mathbb{C}\backslash \{0\}$, 
	if the following conditions hold:

	\begin{itemize}
		\item $\lambda_{i}\notin \ker \mu^{\dag}$ for all $i$;
		\item $\lambda_{i}\neq \lambda_{j}$ for $i\neq j$ but $\lambda_{i}\subset \lambda_{j} + \lambda_{k}$ for all $i, j, k$.
		\item $\displaystyle
			\sum_{i=1}^{3}\frac{\alpha_{i}}{\mu^{\dag} \lambda_{i}} \lambda_{i} = 
			\frac{\alpha_{1}}{\mu^{\dag} \lambda_{1}} \lambda_{1} + \frac{\alpha_{2}}{\mu^{\dag} \lambda_{2}} \lambda_{2} + 
			\frac{\alpha_{3}}{\mu^{\dag} \lambda_{3}} \lambda_{3} = 0.$
	\end{itemize}		
	
	The last equation above adjusts the ``slope'' of $\lambda_{i}$ w.r.t.~$\lambda_{j}$ and $\lambda_{k}$ and it should be 
	interpreted in terms of spanning vectors. 
	To that end, let $\lambda_{i} = \operatorname{Span}_{\mathbb{C}}\{\mathbf{p}_{i}\}$ 
	and $\mu^{\dag} = \operatorname{Span}_{\mathbb{C}}\{\mathbf{q}^{\dag}\}$. 
	Then we require that the vectors $\mathbf{p}_{1}$, $\mathbf{p}_{2}$, and 
	$\mathbf{p}_{3}$ are linearly dependent but pairwise independent, $\mathbf{q}^{\dag} \mathbf{p}_{i}\neq 0$ for all $i$,
	and the last equation takes the form
	\begin{equation}\label{eq:p-dep}
			\sum_{i=1}^{3}\frac{\alpha_{i}}{\mathbf{q}^{\dag} \mathbf{p}_{i}} \mathbf{p}_{i} = 
			\frac{\alpha_{1}}{\mathbf{q}^{\dag} \mathbf{p}_{1}} \mathbf{p}_{1} + 
			\frac{\alpha_{2}}{\mathbf{q}^{\dag} \mathbf{p}_{2}} \mathbf{p}_{2} + 
			\frac{\alpha_{3}}{\mathbf{q}^{\dag} \mathbf{p}_{3}} \mathbf{p}_{3} = \mathbf{0}.
	\end{equation}
	Note that this equation is homogeneous and so is independent of the choice of basis vectors.	
	We represent such triples by diagrams of the form
	\begin{center}
		\includegraphics[height=1.2in]{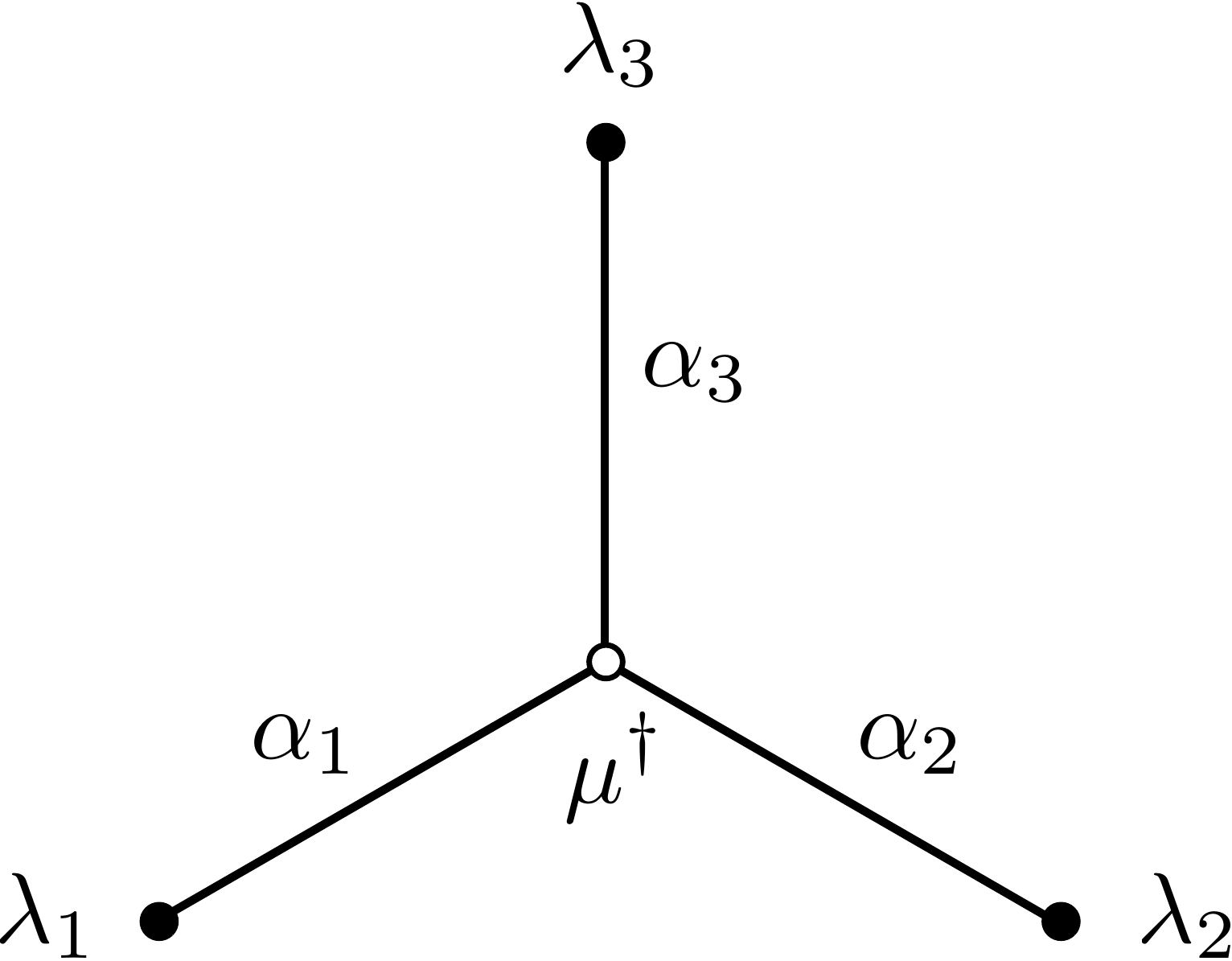} \qquad 		\includegraphics[height=1.2in]{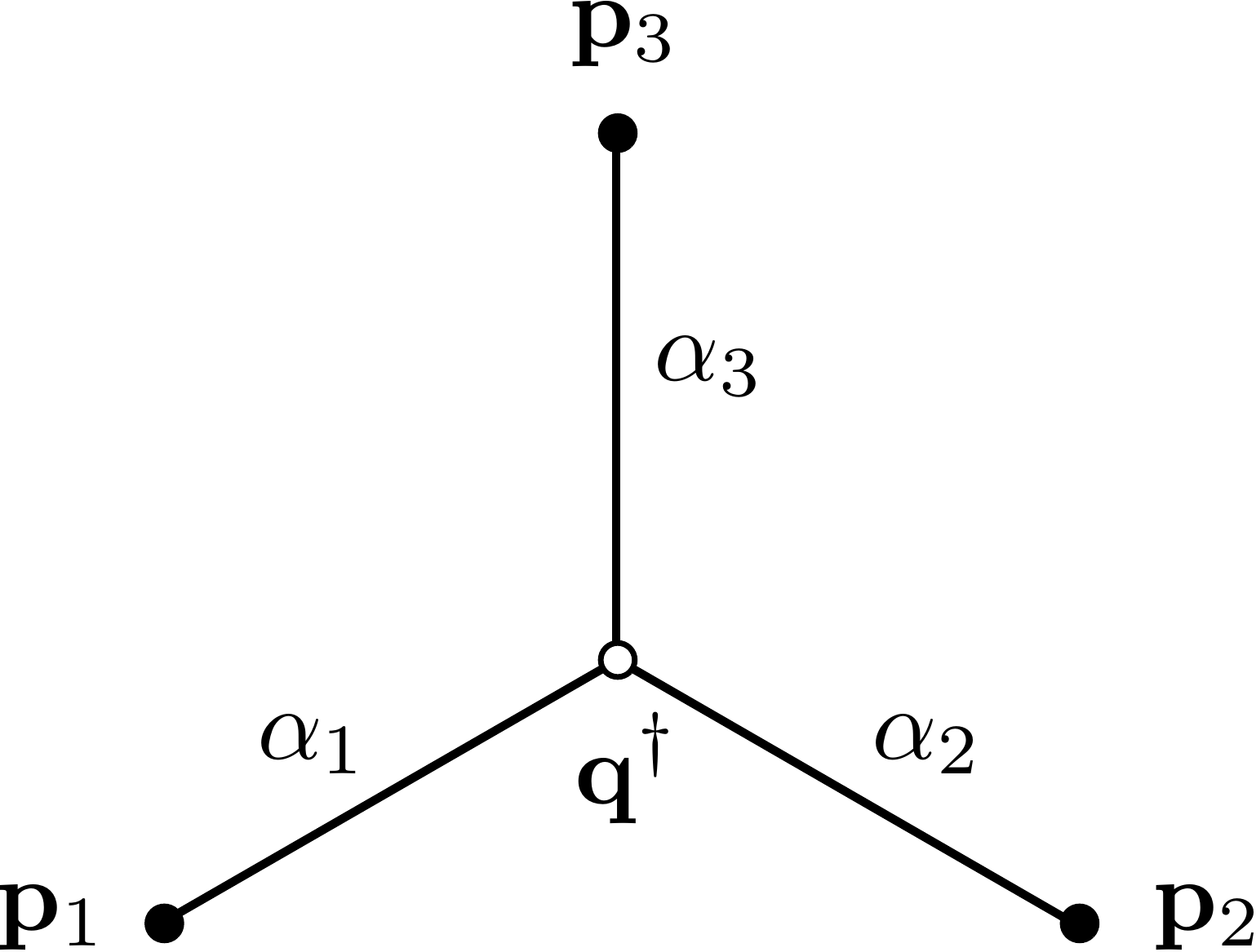},
	\end{center}
	where we use the black circles for 
	subspaces of $\mathbf{V}$ (or their basis vectors) and white circles for subspaces of $\mathbf{V}^\dag$. 	
	In what follows, we switch freely between subspace and vector formulations.
\end{definition}

It is clear that $\alpha_{i}$s are defined up to a common multiplicative constant and satisfy
\begin{equation}
	\alpha_{1} + \alpha_{2} + \alpha_{3} = 0.
\end{equation}
Moreover, given $\mathbf{q}^{\dag}$, there is a \emph{unique} vector 
$[\mathbf{p}_{i}] = [\mathbf{p}_{i}]_{\mathbf{q}^{\dag}}\in \lambda_{i}$ 
normalized by the condition $\mathbf{q}^{\dag} [\mathbf{p}]=1$, i.e.,
\begin{equation}
	[\mathbf{p}_{i}] = \frac{\mathbf{p}_{i}}{\mathbf{q}^{\dag} \mathbf{p}_{i}}\in \lambda_{i},
\end{equation}
where $\mathbf{p}_{i}$ is \emph{any} non-zero vector in $\lambda_{i}$. Then, in terms of the normalized vectors, the 
linear dependence equation (\ref{eq:p-dep}) becomes
\begin{equation}
	\alpha_{1} [\mathbf{p}_{1}] + 	\alpha_{2} [\mathbf{p}_{2}] + 	\alpha_{3} [\mathbf{p}_{3}] = \mathbf{0}\qquad \text{or}\qquad
	[\mathbf{p}_{i}] = \frac{\alpha_{j} [\mathbf{p}_{j}] + \alpha_{k} [\mathbf{p}_{k}]}{\alpha_{j} + \alpha_{k}},\quad
	i,j,k\text{ all distinct},
\end{equation}
which explains the interpretation of this equation in terms of ``slopes''. 
%

\begin{definition}\label{def:w-triple}
We define the \emph{$\lambda$-based dual triples} 
$(\mu_{1},\mu_{2},\mu_{3})$ with parameters 
$(\beta_{1},\beta_{2},\beta_{3})$, $\beta_{i}\in \mathbb{C}\backslash \{0\}$, where
$\lambda\in \mathbb{P}\mathbf{V}$ is a one-dimensional linear subspace of $\mathbf{V}$ and
$\mu_{i}^{\dag}\in \mathbb{P}\mathbf{V}^{\dag}$, $i=1,\dots 3$, are one-dimensional linear subspaces of
$\mathbf{V}^{\dag}$  in exactly the same way, and represent them by the following diagrams:

\begin{center}
	\includegraphics[height=1.2in]{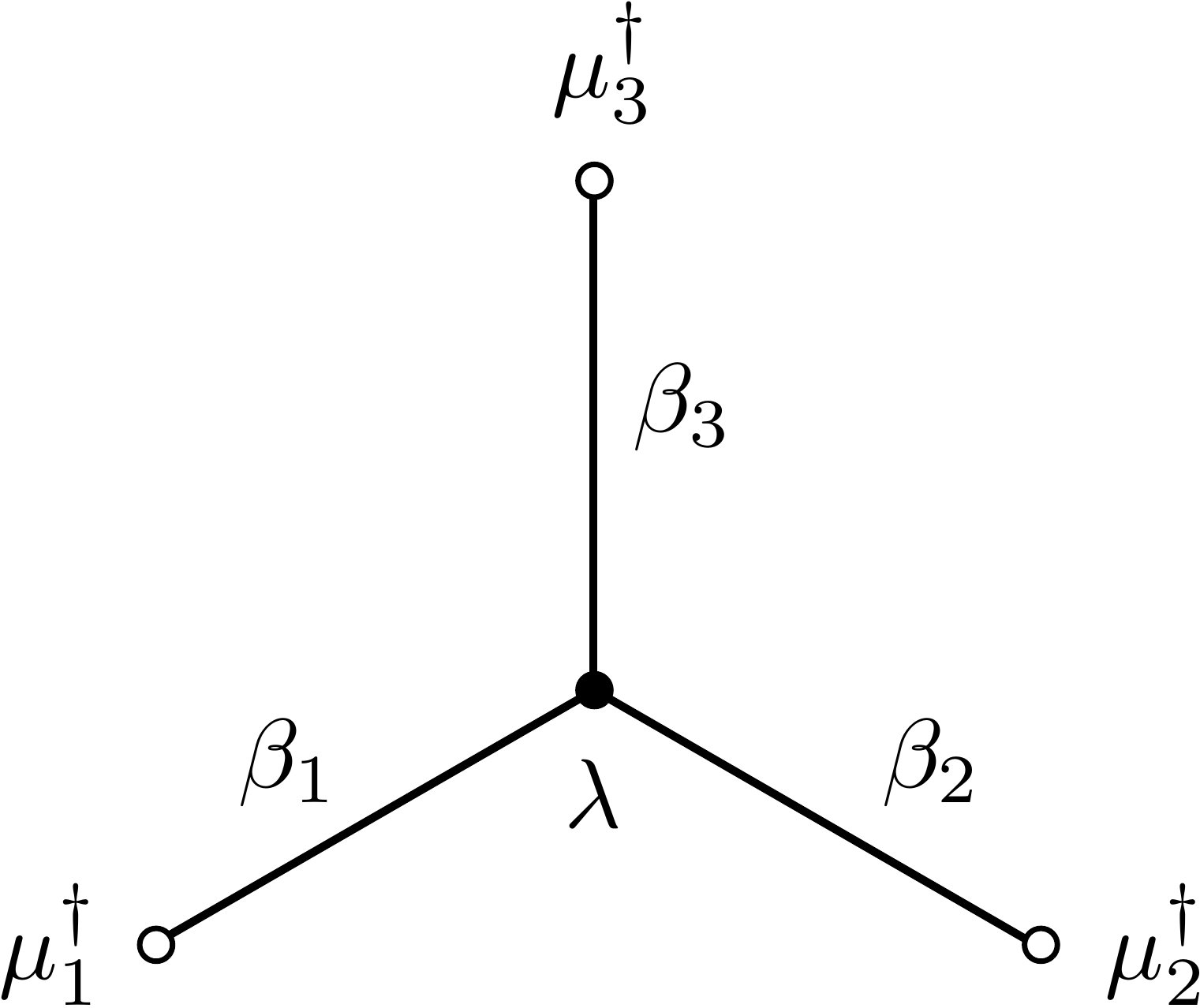} \qquad\qquad\qquad \includegraphics[height=1.2in]{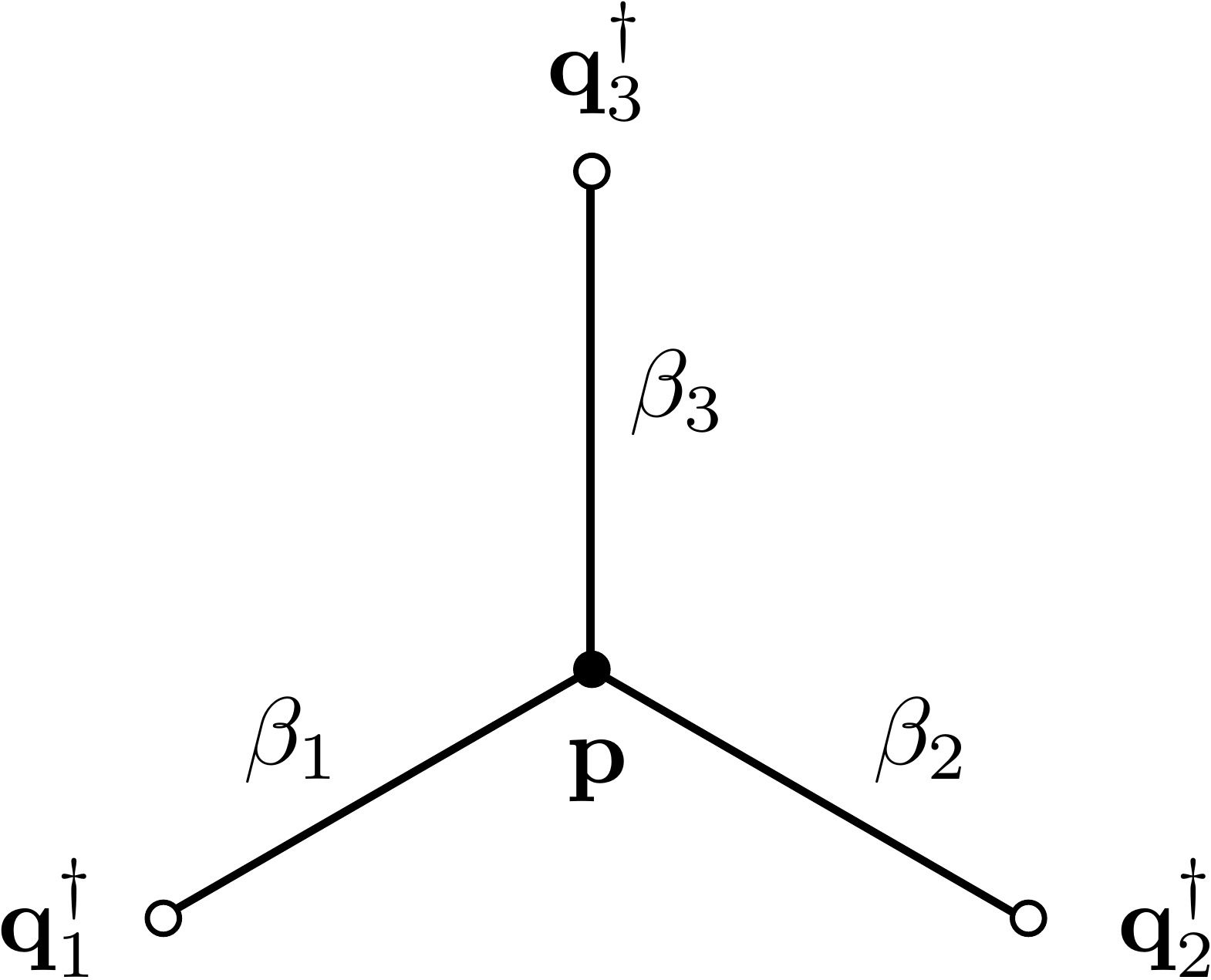}\quad.
\end{center}
\end{definition}

Finally, the behavior of the triples under the action of $\mathbf{GL}(\mathbf{V})$
is described by the following obvious Lemma.
\begin{lemma}\label{lem:group action}
	Let $\mathbf{A}\in \mathbf{GL}(\mathbf{V})$.
	\begin{enumerate}[(i)]
		\item If $(\lambda_{1},\lambda_{2},\lambda_{3})$
		is a $\mu^{\dag}$-based triple with parameters $(\alpha_{1},\alpha_{2},\alpha_{3})$, then	
		$(\mathbf{A}\lambda_{1}, \mathbf{A}\lambda_{2}, \mathbf{A}\lambda_{3})$ is a 
		$\mu^{\dag} \mathbf{A}^{-1}$-based triple with the same parameters $(\alpha_{1},\alpha_{2},\alpha_{3})$;
		\item dually, if $(\mu_{1}^{\dag},\mu_{2}^{\dag},\mu_{3}^{\dag})$
		be a $\lambda$-based triple with parameters $(\beta_{1},\beta_{2},\beta_{3})$, then 
		$(\mu_{1}^{\dag}\mathbf{A}^{-1},\mu_{2}^{\dag}\mathbf{A}^{-1},\mu_{3}^{\dag}\mathbf{A}^{-1})$ is an 
		$\mathbf{A}\lambda$-based triple with the same parameters $(\beta_{1},\beta_{2},\beta_{3})$.
	\end{enumerate}
\end{lemma}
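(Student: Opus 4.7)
The plan is to verify the three defining conditions of Definition~\ref{def:v-triple} (resp.~Definition~\ref{def:w-triple}) after transporting by $\mathbf{A}$, relying only on the linearity and invertibility of $\mathbf{A}$ together with the trivial but crucial identity
\begin{equation*}
(\mu^{\dag}\mathbf{A}^{-1})(\mathbf{A}\mathbf{p}) \;=\; \mu^{\dag}\mathbf{p}, \qquad \mathbf{p}\in\mathbf{V}.
\end{equation*}
This identity immediately implies that the scalar pairings $\mathbf{q}^{\dag}\mathbf{p}_{i}$ are unchanged after one substitutes $\mathbf{p}_{i}\mapsto \mathbf{A}\mathbf{p}_{i}$ and $\mathbf{q}^{\dag}\mapsto \mathbf{q}^{\dag}\mathbf{A}^{-1}$, and hence that $\mathbf{A}\lambda_{i}\notin\ker(\mu^{\dag}\mathbf{A}^{-1})$ while the normalized representatives transform equivariantly as $[\mathbf{A}\mathbf{p}_{i}]_{\mathbf{q}^{\dag}\mathbf{A}^{-1}} = \mathbf{A}\,[\mathbf{p}_{i}]_{\mathbf{q}^{\dag}}$.

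For the combinatorial conditions in part (i), I would pick spanning vectors $\mathbf{p}_{i}\in\lambda_{i}$ and note that, since $\mathbf{A}$ is injective on lines, the images $\mathbf{A}\lambda_{i}$ remain pairwise distinct, and the inclusion $\lambda_{i}\subset \lambda_{j}+\lambda_{k}$ is mapped to $\mathbf{A}\lambda_{i}\subset \mathbf{A}(\lambda_{j}+\lambda_{k}) = \mathbf{A}\lambda_{j} + \mathbf{A}\lambda_{k}$. The homogeneous relation~(\ref{eq:p-dep}) then follows by applying $\mathbf{A}$ term by term:
\begin{equation*}
\sum_{i=1}^{3}\frac{\alpha_{i}}{(\mathbf{q}^{\dag}\mathbf{A}^{-1})(\mathbf{A}\mathbf{p}_{i})}\,\mathbf{A}\mathbf{p}_{i} \;=\; \mathbf{A}\!\left(\sum_{i=1}^{3}\frac{\alpha_{i}}{\mathbf{q}^{\dag}\mathbf{p}_{i}}\,\mathbf{p}_{i}\right) \;=\; \mathbf{0},
\end{equation*}
so the parameters $(\alpha_{1},\alpha_{2},\alpha_{3})$ are preserved verbatim, as claimed.

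Part (ii) is obtained by the same bookkeeping on the dual side: each row vector $\mathbf{q}_{i}^{\dag}$ spanning $\mu_{i}^{\dag}$ is replaced by $\mathbf{q}_{i}^{\dag}\mathbf{A}^{-1}$ and the column vector spanning $\lambda$ by its $\mathbf{A}$-image. The same pairing identity keeps every value $(\mu_{i}^{\dag}\mathbf{A}^{-1})(\mathbf{A}\lambda)=\mu_{i}^{\dag}\lambda$ intact, the linear-dependence/independence structure on $\mathbf{V}^{\dag}$ transfers through right-multiplication by the invertible $\mathbf{A}^{-1}$, and the defining triple identity survives unchanged. There is no real obstacle in either direction; the one point requiring mild care is to remember that it is the \emph{pairings}, not the individual vectors, that remain invariant, so that the normalizations implicit in the definition behave correctly. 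Once the above pairing identity is isolated, the rest of the verification is a mechanical substitution, which is presumably why the authors label the lemma as obvious.
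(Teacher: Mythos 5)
Your verification is correct and is exactly the routine check the paper has in mind when it calls the lemma ``obvious'' (the paper supplies no proof): the invariance of the pairings under $\mathbf{p}\mapsto\mathbf{A}\mathbf{p}$, $\mathbf{q}^{\dag}\mapsto\mathbf{q}^{\dag}\mathbf{A}^{-1}$ is the whole point, and applying $\mathbf{A}$ term by term to the dependence relation~(\ref{eq:p-dep}) preserves the parameters. Nothing is missing.
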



\subsection{Elementary Divisors and Triples} 
\label{sub:elementary_divisors_and_triples}
We now turn our attention to the study of the relationship between triples and elementary divisors. 
The following Lemma is immediate.

\begin{lemma}\label{lem:el-div-triples}
	Let 
	\begin{equation}
	\mathbf{B} = \mathbf{I} + \frac{ z_{0} - \zeta_{0} }{ z^{*} - z_{0} } \frac{\mathbf{p} \mathbf{q}^{\dag}}{\mathbf{q}^{\dag} \mathbf{p}}
	= \mathbf{I} + \frac{ z_{0} - \zeta_{0} }{ z^{*} - z_{0} } [\mathbf{p}] \mathbf{q}^{\dag} = 
	\mathbf{I} + \frac{ z_{0} - \zeta_{0} }{ z^{*} - z_{0} } \mathbf{p} [\mathbf{q}^{\dag}],	
	\end{equation}
	$[\mathbf{p}] = [\mathbf{p}]_{\mathbf{q}^{\dag}}$ and $[\mathbf{q}^{\dag}] = [\mathbf{q}^{\dag}]_{\mathbf{p}}$. 
	Note that we can normalize either $\mathbf{p}$ or $\mathbf{q}^{\dag}$, but not both. The choice of which one 
	should be normalized depends on whether we think of $\mathbf{B}$ as acting on the 
	elements of $\mathbf{V}$ or $\mathbf{V}^{\dag}$. 
	Then
	\begin{enumerate}[(i)]
		\item Vectors $\mathbf{p}$ and $\mathbf{q}^{\dag}$ are the eigenvectors of $\mathbf{B}$,
		\begin{equation}
			\mathbf{q}^{\dag} \mathbf{B} = \frac{ z^{*} - \zeta_{0} }{ z^{*} - z_{0} } \mathbf{q}^{\dag},\qquad 
			\mathbf{B}\mathbf{p} = \frac{ z^{*} - \zeta_{0} }{ z^{*} - z_{0} }\mathbf{p},\qquad\text{where}\quad
			\det \mathbf{B} = \frac{ z^{*} - \zeta_{0} }{ z^{*} - z_{0} }.
		\end{equation}
		
		\item For any vector $\mathbf{w}$ such that $\mathbf{q}^{\dag} \mathbf{w} \neq 0$, and for any 
		row-vector $\mathbf{w}^{\dag}$ such that $\mathbf{w}^{\dag} \mathbf{p} \neq 0$, using normalized vectors
		we get 
		\begin{equation}
 			(\mathbf{B} - \mathbf{I}) [\mathbf{w}]_{\mathbf{q}^{\dag}} = 
			\frac{ z_{0} - \zeta_{0} }{ z^{*} - z_{0} } [\mathbf{p}]_{\mathbf{q}^{\dag}},\qquad 
			[\mathbf{w}^{\dag}]_{\mathbf{p}}(\mathbf{B} - \mathbf{I}) = \frac{ z_{0} - \zeta_{0} }{ z^{*} - z_{0} } [\mathbf{q}^{\dag}]_\mathbf{p}.
		\end{equation}
		\item If $\mathbf{v} = \mathbf{B} \mathbf{w}$, then 
		$[\mathbf{v}]_{\mathbf{q}^{\dag}} = (z^{*} - z_{0})/(z^{*} - \zeta_{0}) \mathbf{B}[\mathbf{w}]_{\mathbf{q}^{\dag}}$, and so 
		$(\mathbf{w},\mathbf{p},\mathbf{v})$ form a $\mathbf{q}^{\dag}$-triple with parameters
		$(z^{*} - z_{0},z_{0}  - \zeta_{0},\zeta_{0} - z^{*})$. Similarly, if $\mathbf{v}^{\dag} = \mathbf{w}^{\dag}\mathbf{B}$,
		then $[\mathbf{v}^{\dag}]_{\mathbf{p}} = (z^{*} - z_{0})/(z^{*} - \zeta_{0})[\mathbf{w}^{\dag}]_{\mathbf{p}} \mathbf{B}$,
		and $(\mathbf{w}^{\dag},\mathbf{p}^{\dag},\mathbf{v}^{\dag})$ form a $\mathbf{p}$-triple with parameters
		$(z^{*} - z_{0},z_{0}  - \zeta_{0},\zeta_{0} - z^{*})$.
		\item Conversely, for any $\mathbf{q}^{\dag}$-triple $(\mathbf{p}_{1}, \mathbf{p}_{2}, \mathbf{p}_{3})$ 
		with parameters $(\alpha_{1}, \alpha_{2}, \alpha_{3})$,
		\begin{equation}
			[\mathbf{p}_{3}] = \frac{\alpha_{1}}{\alpha_{1} + \alpha_{2}}
			\left(\mathbf{I} + \frac{\alpha_{2}}{\alpha_{1}}[\mathbf{p}_{2}] \mathbf{q}^{\dag}\right) [\mathbf{p}_{1}]
			= \frac{1}{\alpha_{1} + \alpha_{2}} \frac{\partial}{\partial \mathbf{q}^{\dag}}
			\left(\alpha_{1} \log(\mathbf{q}^{\dag} \mathbf{p}_{1}) + \alpha_{2} \log(\mathbf{q}^{\dag}\mathbf{p}_{2}) \right),
		\end{equation}
		and for any $\mathbf{p}$-triple $(\mathbf{q}_{1}^{\dag},\mathbf{q}_{2}^{\dag},\mathbf{q}_{3}^{\dag})$
		with parameters $(\beta_{1},\beta_{2},\beta_{3})$,
		\begin{equation}
			[\mathbf{q}_{3}^{\dag}] = \frac{\beta_{1}}{\beta_{1} + \beta_{2}}
			[\mathbf{q}_{1}^{\dag}]\left(\mathbf{I} + \frac{\beta_{2}}{\beta_{1}}\mathbf{p}[\mathbf{q}_{2}^{\dag}] \right) 
			= \frac{1}{\beta_{1} + \beta_{2}} \frac{\partial}{\partial \mathbf{p}}
			\left(\beta_{1} \log(\mathbf{q}^{\dag}_{1} \mathbf{p}) + \beta_{2} \log(\mathbf{q}^{\dag}_{2}\mathbf{p}) \right).
		\end{equation}
	\end{enumerate}
	
\end{lemma}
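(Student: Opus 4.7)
The plan is to verify each of the four parts by direct computation using the explicit form $\mathbf{B} = \mathbf{I} + \frac{z_{0}-\zeta_{0}}{z^{*}-z_{0}}\frac{\mathbf{p}\mathbf{q}^{\dag}}{\mathbf{q}^{\dag}\mathbf{p}}$. None of the steps should be subtle; the content is really just bookkeeping, and the interesting point is that it exhibits the precise parameters that appear in the triple.

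For part (i), I would just compute $\mathbf{B}\mathbf{p}$ and $\mathbf{q}^{\dag}\mathbf{B}$, noting that in each case the factor $\mathbf{q}^{\dag}\mathbf{p}/\mathbf{q}^{\dag}\mathbf{p}=1$ collapses, yielding the eigenvalue $1 + \frac{z_{0}-\zeta_{0}}{z^{*}-z_{0}} = \frac{z^{*}-\zeta_{0}}{z^{*}-z_{0}}$. The determinant then follows from the rank-one determinant identity $\det(\mathbf{I}+\mathbf{u}\mathbf{v}^{\dag})=1+\mathbf{v}^{\dag}\mathbf{u}$, giving the same value.

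For part (ii), I would apply $(\mathbf{B}-\mathbf{I})$ to $[\mathbf{w}]_{\mathbf{q}^{\dag}}=\mathbf{w}/(\mathbf{q}^{\dag}\mathbf{w})$: the numerator $\mathbf{p}\mathbf{q}^{\dag}\mathbf{w}$ combined with $1/(\mathbf{q}^{\dag}\mathbf{p})(\mathbf{q}^{\dag}\mathbf{w})$ gives exactly $[\mathbf{p}]_{\mathbf{q}^{\dag}}$ times the scalar $(z_{0}-\zeta_{0})/(z^{*}-z_{0})$. The dual statement is the transpose of this computation.

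For part (iii), starting from $\mathbf{v}=\mathbf{B}\mathbf{w}$ I first note that part (i) forces $\mathbf{q}^{\dag}\mathbf{v}=\frac{z^{*}-\zeta_{0}}{z^{*}-z_{0}}\mathbf{q}^{\dag}\mathbf{w}$, which immediately yields the claimed normalization identity $[\mathbf{v}]_{\mathbf{q}^{\dag}}=\frac{z^{*}-z_{0}}{z^{*}-\zeta_{0}}\mathbf{B}[\mathbf{w}]_{\mathbf{q}^{\dag}}$. Rearranging with part (ii) gives
\begin{equation}
(z^{*}-z_{0})[\mathbf{w}] + (z_{0}-\zeta_{0})[\mathbf{p}] + (\zeta_{0}-z^{*})[\mathbf{v}] = \mathbf{0},
\end{equation}
and the three coefficients sum to zero, so by Definition~\ref{def:v-triple} this exhibits $(\mathbf{w},\mathbf{p},\mathbf{v})$ as a $\mathbf{q}^{\dag}$-based triple with the stated parameters. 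The dual triple claim is again just the transpose.

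For part (iv), the defining relation of a $\mathbf{q}^{\dag}$-triple in normalized form is $\alpha_{1}[\mathbf{p}_{1}] + \alpha_{2}[\mathbf{p}_{2}] + \alpha_{3}[\mathbf{p}_{3}] = \mathbf{0}$ with $\alpha_{3}=-(\alpha_{1}+\alpha_{2})$, so $[\mathbf{p}_{3}] = (\alpha_{1}[\mathbf{p}_{1}]+\alpha_{2}[\mathbf{p}_{2}])/(\alpha_{1}+\alpha_{2})$. Expanding the claimed operator expression $\frac{\alpha_{1}}{\alpha_{1}+\alpha_{2}}(\mathbf{I}+\frac{\alpha_{2}}{\alpha_{1}}[\mathbf{p}_{2}]\mathbf{q}^{\dag})[\mathbf{p}_{1}]$ and using $\mathbf{q}^{\dag}[\mathbf{p}_{1}]=1$ reproduces this convex combination exactly. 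For the gradient form, one checks that $\frac{\partial}{\partial \mathbf{q}^{\dag}}\log(\mathbf{q}^{\dag}\mathbf{p}_{i}) = [\mathbf{p}_{i}]$, after which the formula is immediate. The dual statement for $\mathbf{p}$-triples is proved identically with $\mathbf{q}^{\dag}$ and $\mathbf{p}$ interchanged. There is no real obstacle here; the only thing to be careful about is consistently distinguishing normalized vs.\ unnormalized vectors throughout, since the cleanliness of the triple identity depends on all three vectors being normalized against the same covector $\mathbf{q}^{\dag}$.
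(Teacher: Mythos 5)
Your verification is correct and is exactly the direct computation the paper has in mind; the paper itself offers no proof, stating only that the lemma ``is immediate,'' and your bookkeeping (eigenvalue collapse for (i), the rank-one determinant identity, the renormalization $\mathbf{q}^{\dag}\mathbf{v}=\tfrac{z^{*}-\zeta_{0}}{z^{*}-z_{0}}\mathbf{q}^{\dag}\mathbf{w}$ for (iii), and $\tfrac{\partial}{\partial\mathbf{q}^{\dag}}\log(\mathbf{q}^{\dag}\mathbf{p}_{i})=[\mathbf{p}_{i}]$ for (iv)) fills in precisely the omitted steps. No issues.
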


\subsection{Gluing Properties} 
\label{sub:gluing_properties}
The following two Lemmas  show that 
any two triples with matching parts can be ``glued'' together and then completed
into a \emph{cube} in such a way that any vertex of the cube makes a triple. More precisely,
the \emph{Face Lemma} shows how to glue two triples  of the same type that share two vertices,
and the \emph{Edge Lemma} shows how to glue two triples of different type if the center of one is the end of the 
other and vise-versa, and the common edge parameters coincide.

\begin{lemma}[The \emph{Face} Lemma]\label{lem:face-lemma}
	Let the pairs $\mathbf{p}_{1}, \mathbf{p}_{2}\in \mathbf{V}$ and $\mathbf{q}_{1}^{\dag}, \mathbf{q}_{2}^{\dag}\in \mathbf{V}^{\dag}$
	be linearly independent and not-orthogonal, $\mathbf{q}_{i}^{\dag} \mathbf{p}_{j}\neq0$, and let 
	$\alpha_{1}, \alpha_{2}, \beta_{1},\beta_{2}$  be any non-zero complex parameters. Then the quad 
	\begin{center}
		\includegraphics[width=1in]{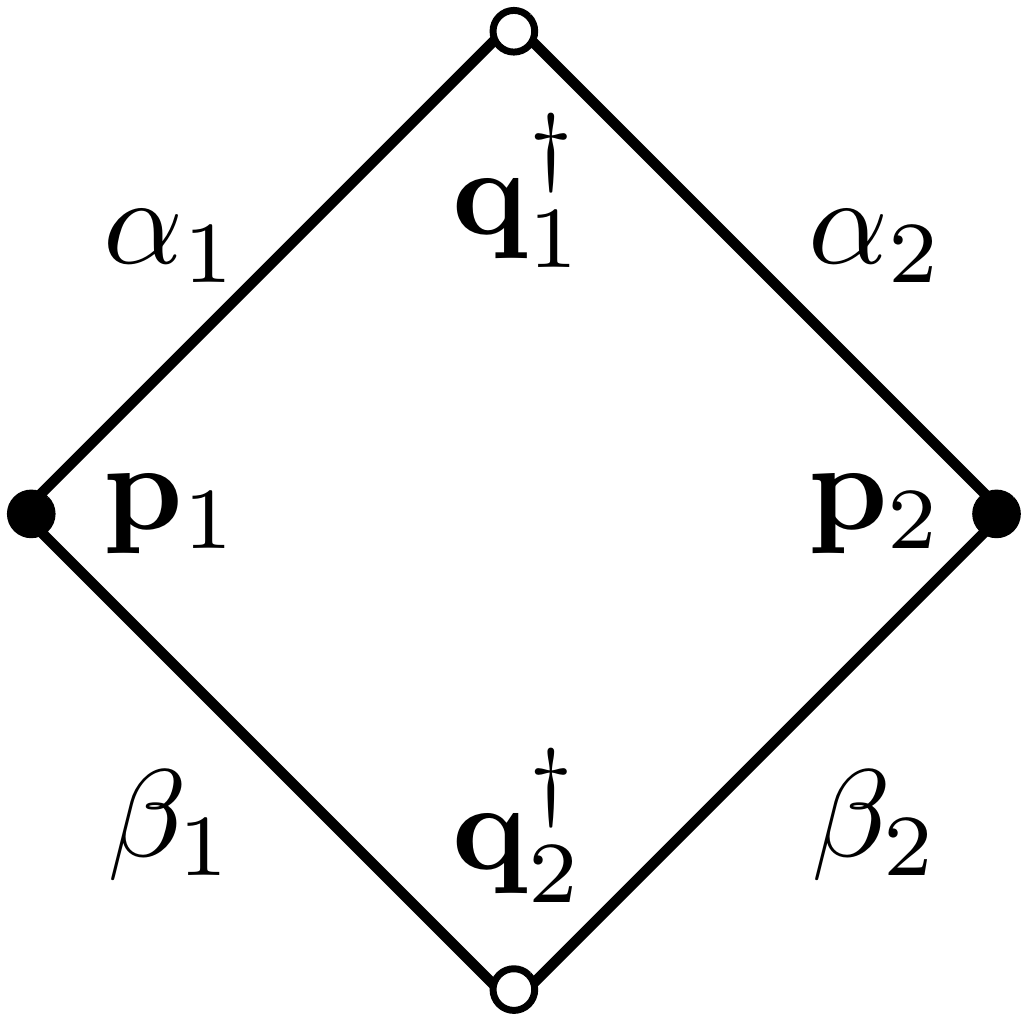}
	\end{center} 
	can be completed to a cube with 
	\begin{equation}
		\alpha_{3} = - (\alpha_{1} + \alpha_{2}),\quad \beta_{3} = -(\beta_{1} + \beta_{2}),\quad 
		\gamma_{1} = -(\alpha_{1} + \beta_{1}), \quad \gamma_{2} = -(\alpha_{2} + \beta_{2}),
	\end{equation}
	and some other parameters $\alpha_{4}, \alpha_{5}, \beta_{4}, \beta_{5}$,
	\begin{center}
		\includegraphics[height=1.5in]{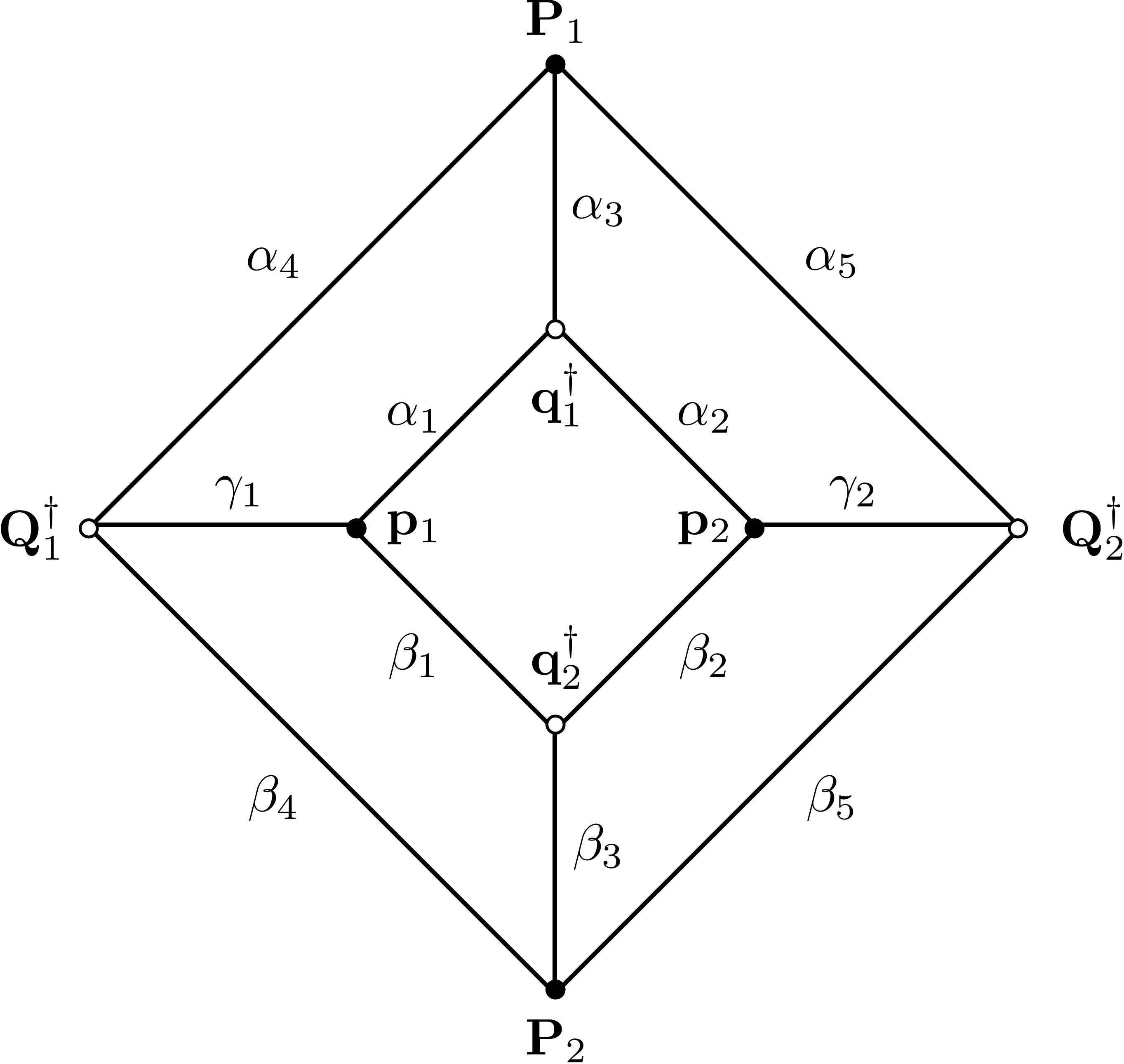}\qquad \includegraphics[height=1.5in]{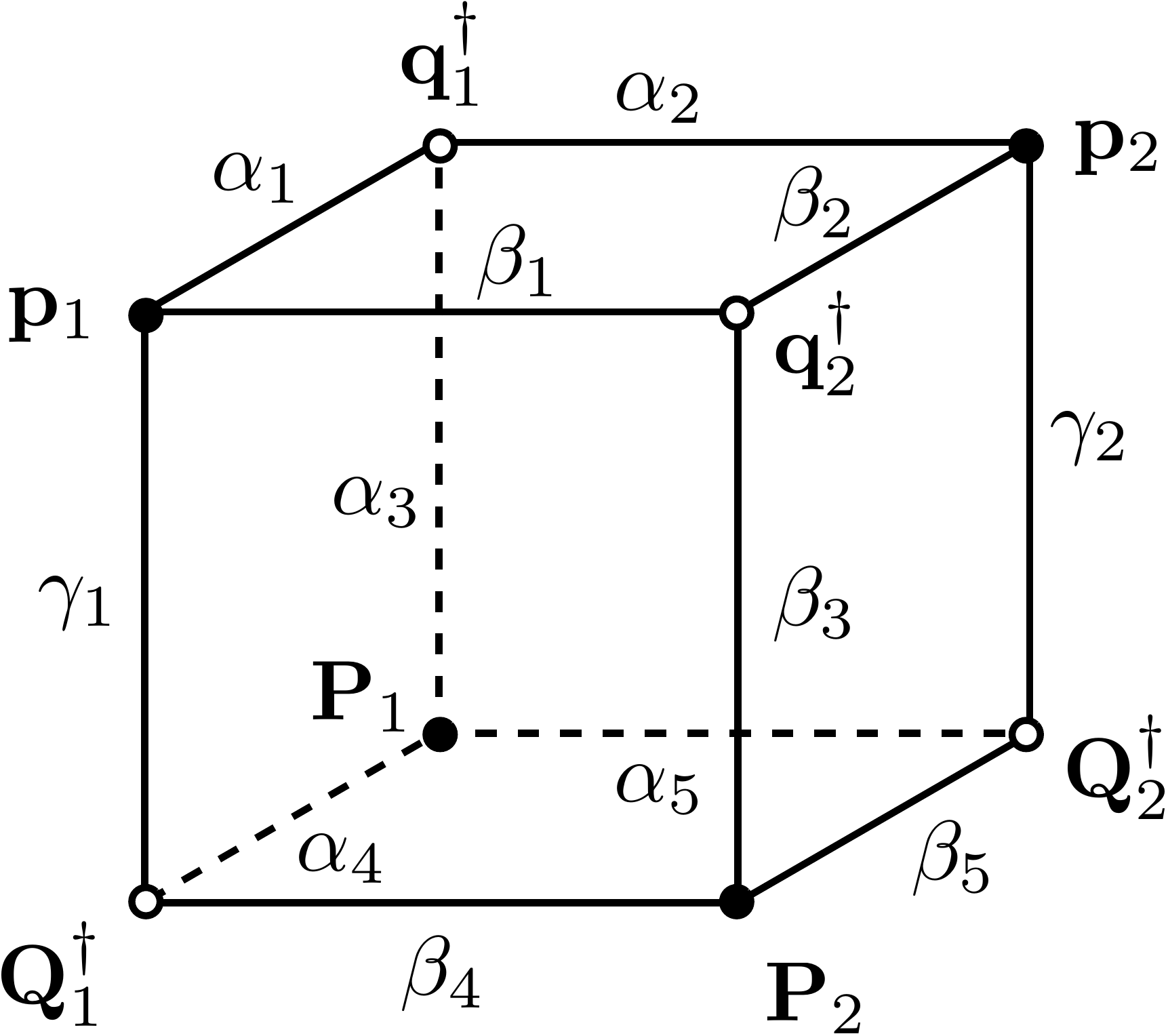}
	\end{center}
	so that at each vertex the corresponding triple condition is satisfied. Moreover, if the parameters satisfy the additional
	cyclic condition
	\begin{equation}
		\alpha_{1} + \alpha_{2} + \beta_{1} + \beta_{2} = 0,
	\end{equation}
	on the initial face, then the same condition is satisfied on the remaining five faces, and so 
	\begin{equation}
		\alpha_{4} = - \beta_{2},\quad \alpha_{5} = - \beta_{1},\quad \beta_{4} = - \alpha_{2}, \quad \beta_{5} = - \alpha_{1}.
	\end{equation}
	Further, let 
	\begin{equation}
		\mathcal{L} = \alpha_{1} \log(\mathbf{q}_{1}^{\dag} \mathbf{p}_{1}) + \alpha_{2} \log(\mathbf{q}_{1}^{\dag} \mathbf{p}_{2})
		 + \beta_{1} \log(\mathbf{q}_{2}^{\dag} \mathbf{p}_{1}) + \beta_{2}\log(\mathbf{q}_{1}^{\dag} \mathbf{p}_{2}).
	\end{equation}
	Then
	\begin{equation}
		[\mathbf{P}_{1}]_{\mathbf{q}_{1}^{\dag}} = \alpha_{3} \frac{\partial \mathcal{L}}{\partial \mathbf{q}_{1}^{\dag}},\qquad
		[\mathbf{P}_{2}]_{\mathbf{q}_{2}^{\dag}} = \beta_{3} \frac{\partial \mathcal{L}}{\partial \mathbf{q}_{2}^{\dag}},\qquad
		[\mathbf{Q}_{1}^{\dag}]_{\mathbf{p}_{1}} = \gamma_{1} \frac{\partial \mathcal{L}}{\partial \mathbf{p}_{1}},\qquad		
		[\mathbf{Q}_{2}^{\dag}]_{\mathbf{p}_{2}} = \gamma_{2} \frac{\partial \mathcal{L}}{\partial \mathbf{p}_{1}}.
	\end{equation}	
\end{lemma}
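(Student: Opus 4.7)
The plan is to construct the cube explicitly from the initial face, verify the geometric triple condition at each of the four new vertices (which turns out to be automatic), and then extract the edge parameters and the Lagrangian identities by direct computation.

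First, I would apply Lemma 3.1(iv) at each of the four initial-face vertices. The $\mathbf{q}_1^\dag$-based triple at the vertex $\mathbf{q}_1^\dag$, with two of its legs already pinned to $\mathbf{p}_1, \mathbf{p}_2$ and edge parameters $\alpha_1, \alpha_2$, forces both $\alpha_3 = -(\alpha_1+\alpha_2)$ and the explicit formula $[\mathbf{P}_1]_{\mathbf{q}_1^\dag} \sim \alpha_1 [\mathbf{p}_1]_{\mathbf{q}_1^\dag} + \alpha_2 [\mathbf{p}_2]_{\mathbf{q}_1^\dag}$. Symmetrically, the $\mathbf{q}_2^\dag$-triple defines $\mathbf{P}_2$ and yields $\beta_3 = -(\beta_1+\beta_2)$, while the two dual triples at $\mathbf{p}_1, \mathbf{p}_2$ (using Lemma 3.1(iv) for the dual case) define $\mathbf{Q}_1^\dag, \mathbf{Q}_2^\dag$ and yield $\gamma_1 = -(\alpha_1+\beta_1)$, $\gamma_2 = -(\alpha_2+\beta_2)$.

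Second, I would verify the four remaining triples at $\mathbf{P}_1, \mathbf{P}_2, \mathbf{Q}_1^\dag, \mathbf{Q}_2^\dag$. By construction, $\mathbf{P}_1, \mathbf{P}_2 \in \mathrm{span}(\mathbf{p}_1, \mathbf{p}_2)$ and $\mathbf{Q}_1^\dag, \mathbf{Q}_2^\dag \in \mathrm{span}(\mathbf{q}_1^\dag, \mathbf{q}_2^\dag)$. Consequently, at $\mathbf{P}_1$ the three V${}^\dag$-neighbors $(\mathbf{q}_1^\dag, \mathbf{Q}_1^\dag, \mathbf{Q}_2^\dag)$ all lie in a common two-dimensional dual subspace and are therefore automatically linearly dependent; pairwise independence is guaranteed by the non-orthogonality assumption $\mathbf{q}_i^\dag \mathbf{p}_j \neq 0$. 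Identical reasoning handles $\mathbf{P}_2, \mathbf{Q}_1^\dag, \mathbf{Q}_2^\dag$, so the \emph{geometric} part of every remaining triple condition is automatic. To extract the actual edge parameters $\alpha_4, \alpha_5, \beta_4, \beta_5$, I would solve the $2\times 2$ system $A\mathbf{q}_1^\dag + B\mathbf{Q}_1^\dag + C\mathbf{Q}_2^\dag = \mathbf{0}$ for $A:B:C$ and then renormalize by the scalars $\mathbf{v}^\dag \mathbf{P}_1$.

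Third, with the cyclic condition $\alpha_1+\alpha_2+\beta_1+\beta_2 = 0$ imposed, the three-term sums $\alpha_1+\alpha_2+\beta_k$ collapse to $-\beta_{3-k}$ (and symmetrically with the roles of $\alpha$ and $\beta$ swapped), and the resulting rational expressions in the pairings $a_{ij} = \mathbf{q}_i^\dag \mathbf{p}_j$ simplify to the claimed constants $\alpha_4 = -\beta_2,\ \alpha_5 = -\beta_1,\ \beta_4 = -\alpha_2,\ \beta_5 = -\alpha_1$. For the Lagrangian identities, a direct differentiation gives $\partial \mathcal{L}/\partial \mathbf{q}_1^\dag = \alpha_1[\mathbf{p}_1]_{\mathbf{q}_1^\dag} + \alpha_2 [\mathbf{p}_2]_{\mathbf{q}_1^\dag}$, which is exactly the linear combination appearing in the triple equation for $\mathbf{P}_1$; the multiplicative constant $\alpha_3$ (up to the sign convention inherent in Lemma 3.1(iv)) then falls out, and the other three identities follow by the same computation with the indices permuted.

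The main obstacle is the cancellation of the bilinear pairings $a_{ij}$ in the opposite-face edge parameters. Without the cyclic condition these parameters are genuine rational functions of the $a_{ij}$, and the content of the Face Lemma is precisely that $\alpha_1+\alpha_2+\beta_1+\beta_2=0$ is exactly the constraint that forces the $a_{ij}$ to factor out. The verification is a straightforward but moderately lengthy algebraic calculation, which can be shortened by invoking the duality of Lemma 3.2 together with the evident $\alpha\leftrightarrow\beta,\ \mathbf{p}\leftrightarrow\mathbf{q}^\dag$ symmetry of the cube.
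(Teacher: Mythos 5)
Your overall strategy coincides with the paper's: build the four outer vertices from the triple relations on the initial face (which forces $\alpha_{3},\beta_{3},\gamma_{1},\gamma_{2}$), solve $2\times 2$ linear systems for the opposite-face edge parameters, and let the cyclic condition $\kappa=\alpha_{1}+\alpha_{2}+\beta_{1}+\beta_{2}=0$ produce the cancellation of the pairings. There is, however, one genuine gap in how you handle the four outer triples. Each of the unknowns $\alpha_{4},\alpha_{5},\beta_{4},\beta_{5}$ sits on an edge joining a $\mathbf{P}$-vertex to a $\mathbf{Q}^{\dag}$-vertex and therefore enters \emph{two} of the four outer triple conditions; each such condition, together with its already-fixed anchor parameter ($\alpha_{3}$, $\beta_{3}$, $\gamma_{1}$ or $\gamma_{2}$), determines the two unknown coefficients on its remaining edges uniquely. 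The four outer triples thus give eight determinations of four quantities, and the substance of the lemma is that these agree. Your step two reduces the verification at the outer vertices to the ``geometric part'' (linear dependence of three vectors lying in a two-dimensional span), which is indeed automatic, and then extracts the parameters from the systems at $\mathbf{P}_{1},\mathbf{P}_{2}$ only; this leaves unverified that the triples at $\mathbf{Q}_{1}^{\dag},\mathbf{Q}_{2}^{\dag}$ hold with those \emph{same} numerical values of $\alpha_{4},\beta_{4}$ and $\alpha_{5},\beta_{5}$. That cross-consistency is exactly what the paper's computation establishes: it solves the nodes $\mathbf{Q}_{1}^{\dag}$ and $\mathbf{Q}_{2}^{\dag}$ explicitly and then notes that the nodes $\mathbf{P}_{1},\mathbf{P}_{2}$ return the same values. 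It is a nontrivial identity in the pairings $a_{ij}=\mathbf{q}_{i}^{\dag}\mathbf{p}_{j}$, not a formal consequence of linear dependence, and it must appear in any complete write-up.

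A smaller inaccuracy: pairwise independence of the neighbors at an outer vertex does \emph{not} follow from $\mathbf{q}_{i}^{\dag}\mathbf{p}_{j}\neq 0$. For instance, $\mathbf{Q}_{1}^{\dag}\sim \alpha_{1}a_{11}^{-1}\mathbf{q}_{1}^{\dag}+\beta_{1}a_{21}^{-1}\mathbf{q}_{2}^{\dag}$ and $\mathbf{Q}_{2}^{\dag}\sim \alpha_{2}a_{12}^{-1}\mathbf{q}_{1}^{\dag}+\beta_{2}a_{22}^{-1}\mathbf{q}_{2}^{\dag}$ become proportional precisely when $\alpha_{1}\beta_{2}\,a_{21}a_{12}=\alpha_{2}\beta_{1}\,a_{11}a_{22}$, which is the vanishing of the denominator appearing in the explicit formulas for $\alpha_{4},\beta_{4}$; a genericity hypothesis is needed here (the paper is silent on this point as well). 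The rest of your plan --- the determination of $\alpha_{3},\beta_{3},\gamma_{1},\gamma_{2}$, the role of $\kappa=0$ in making the $a_{ij}$ drop out of the opposite-face parameters, and the derivation of the generating-function identities by differentiating $\mathcal{L}$ --- matches the paper and is fine.
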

\begin{proof}
	First, note that the normalized outside vertices are uniquely determined by the given data,
	\begin{align}
		[\mathbf{P}_{1}]_{\mathbf{q}_{1}^{\dag}} &= \frac{ \alpha_{1} [\mathbf{p}_{1}]_{\mathbf{q}_{1}^{\dag}} + 
		\alpha_{2} [\mathbf{p}_{2}]_{\mathbf{q}_{1}^{\dag}}}{\alpha_{1} + \alpha_{2}} &
		[\mathbf{P}_{2}]_{\mathbf{q}_{2}^{\dag}} &= \frac{ \beta_{1} [\mathbf{p}_{1}]_{\mathbf{q}_{2}^{\dag}} + 
		\beta_{2} [\mathbf{p}_{2}]_{\mathbf{q}_{2}^{\dag}}}{\beta_{1} + \beta_{2}} \\
		[\mathbf{Q}_{1}^{\dag}]_{\mathbf{p}_{1}} &= \frac{ \alpha_{1} [\mathbf{q}_{1}^{\dag}]_{\mathbf{p}_{1}} + 
		\beta_{1} [\mathbf{q}_{2}^{\dag}]_{\mathbf{p}_{1}}}{\alpha_{1} + \beta_{1}} & 
		[\mathbf{Q}_{2}^{\dag}]_{\mathbf{p}_{2}} &= \frac{ \alpha_{2} [\mathbf{q}_{1}^{\dag}]_{\mathbf{p}_{2}} + 
		\beta_{2} [\mathbf{q}_{2}^{\dag}]_{\mathbf{p}_{2}}}{\alpha_{2} + \beta_{2}}.  
	\end{align}
	Thus, to prove the theorem we need to show that it is possible to find the parameters $\alpha_{4}, \alpha_{5}, \beta_{4}, \beta_{5}$
	so that 
	\begin{equation}
		\alpha_{4} + \alpha_{5} = \alpha_{1} + \alpha_{2}, \quad
		\beta_{4} + \beta_{5} = \beta_{1} + \beta_{2}, \quad
		\alpha_{4} + \beta_{4} = \alpha_{1} + \beta_{1},\quad 
		\alpha_{5} + \beta_{5} = \alpha_{2} + \beta_{2},
	\end{equation}
	and that the following equations hold at the outside nodes:
	\begin{alignat}{2}
		\mathbf{Q}_{1}^{\dag}&: &\qquad \alpha_{4}[\mathbf{P}_{1}]_{\mathbf{Q}_{1}^{\dag}} + 
		\beta_{4}[\mathbf{P}_{2}]_{\mathbf{Q}_{1}^{\dag}} + \gamma_{1} [\mathbf{p}_{1}]_{\mathbf{Q}_{1}^{\dag}} &= \mathbf{0}\\
		\mathbf{Q}_{2}^{\dag}&: &\qquad \alpha_{5}[\mathbf{P}_{1}]_{\mathbf{Q}_{2}^{\dag}} + 
		\beta_{5}[\mathbf{P}_{2}]_{\mathbf{Q}_{2}^{\dag}} + \gamma_{2} [\mathbf{p}_{2}]_{\mathbf{Q}_{2}^{\dag}} &= \mathbf{0}\\
		\mathbf{P}_{1}&: &\qquad \alpha_{4}[\mathbf{Q}_{1}^{\dag}]_{\mathbf{P}_{1}} + 
		\alpha_{5}[\mathbf{Q}_{2}^{\dag}]_{\mathbf{P}_{1}} + \alpha_{3} [\mathbf{q}_{1}^{\dag}]_{\mathbf{P}_{1}} &= \mathbf{0}\\		
		\mathbf{P}_{2}&: &\qquad \beta_{4}[\mathbf{Q}_{1}^{\dag}]_{\mathbf{P}_{2}} + 
		\beta_{5}[\mathbf{Q}_{2}^{\dag}]_{\mathbf{P}_{2}} + \beta_{3} [\mathbf{q}_{2}^{\dag}]_{\mathbf{P}_{2}} &= \mathbf{0}.
	\end{alignat}
	Using the change of normalization rule 
	\begin{equation}
	[\mathbf{p}]_{\mathbf{q}_{i}} = \frac{\mathbf{q}_{j}^{\dag}\mathbf{p}}{\mathbf{q}_{i}^{\dag}\mathbf{p}}
	[\mathbf{p}]_{\mathbf{q}_{j}^{\dag}},	
	\end{equation}
	the first equation, when decomposed in the spanning basis $\mathbf{p}_{1}, \mathbf{p}_{2}$, becomes the following system:
	\begin{equation}
		\left\{\begin{aligned}
			\left(\frac{\alpha_{4}}{\alpha_{1} + \alpha_{2}}
			\frac{\mathbf{q}_{1}^{\dag} \mathbf{P}_{1}}{\mathbf{Q}_{1}^{\dag}\mathbf{P}_{1}}\right) \cdot
			\alpha_{1}\frac{\mathbf{Q}_{1}^{\dag}\mathbf{p}_{1}}{\mathbf{q}_{1}^{\dag} \mathbf{p}_{1}} + 
			\left(\frac{\beta_{4}}{ \beta_{1} + \beta_{2}}
			\frac{\mathbf{q}_{2}^{\dag} \mathbf{P}_{2}}{\mathbf{Q}_{1}^{\dag} \mathbf{P}_{2}}\right)\cdot 
			\beta_{1}\frac{\mathbf{Q}_{1}^{\dag} \mathbf{p}_{1}}{\mathbf{q}_{2}^{\dag} \mathbf{p}_{1}} &= \alpha_{1} + \beta_{1} \\
			\left(\frac{\alpha_{4}}{\alpha_{1} + \alpha_{2}}
			\frac{\mathbf{q}_{1}^{\dag} \mathbf{P}_{1}}{\mathbf{Q}_{1}^{\dag}\mathbf{P}_{1}}\right) \cdot
			\alpha_{2}\frac{\mathbf{q}_{2}^{\dag}\mathbf{p}_{2}}{\mathbf{q}_{1}^{\dag} \mathbf{p}_{2}} + 
			\left(\frac{\beta_{4}}{ \beta_{1} + \beta_{2}}
			\frac{\mathbf{q}_{2}^{\dag} \mathbf{P}_{2}}{\mathbf{Q}_{1}^{\dag} \mathbf{P}_{2}}\right)\cdot 
			\beta_{2} &= 0
		\end{aligned}\right.
	\end{equation} 
	Solving it gives
	\begin{align}
		\alpha_{4} &= \beta_{2} (\alpha_{1} + \alpha_{2})(\alpha_{1} + \beta_{1})
		[\mathbf{Q}_{1}^{\dag}]_{\mathbf{p}_{1}}[\mathbf{P}_{1}]_{\mathbf{q}_{1}^{\dag}}
		\frac{\mathbf{q}_{1}^{\dag}\mathbf{p}_{1} \mathbf{q}_{2}^{\dag} \mathbf{p}_{1} \mathbf{q}_{1}^{\dag} \mathbf{p}_{2}}{
		\alpha_{1} \beta_{2} \mathbf{q}_{2}^{\dag} \mathbf{p}_{1}  \mathbf{q}_{1}^{\dag} \mathbf{p}_{2}- 
		 \alpha_{2} \beta_{1} \mathbf{q}_{1}^{\dag} \mathbf{p}_{1} \mathbf{q}_{2}^{\dag} \mathbf{p}_{2}},\\
		\beta_{4} &= \alpha_{2} (\beta_{1} + \beta_{2})(\alpha_{1} + \beta_{1})
		[\mathbf{Q}_{1}^{\dag}]_{\mathbf{p}_{1}}[\mathbf{P}_{2}]_{\mathbf{q}_{2}^{\dag}}
		\frac{- \mathbf{q}_{1}^{\dag}\mathbf{p}_{1} \mathbf{q}_{2}^{\dag} \mathbf{p}_{1} \mathbf{q}_{1}^{\dag} \mathbf{p}_{2}}{
		\alpha_{1} \beta_{2} \mathbf{q}_{2}^{\dag} \mathbf{p}_{1}  \mathbf{q}_{1}^{\dag} \mathbf{p}_{2}- 
		 \alpha_{2} \beta_{1} \mathbf{q}_{1}^{\dag} \mathbf{p}_{1} \mathbf{q}_{2}^{\dag} \mathbf{p}_{2}},
	\end{align}
	which, together with the expressions
	\begin{align}
		[\mathbf{Q}_{1}^{\dag}]_{\mathbf{p}_{1}}[\mathbf{P}_{1}]_{\mathbf{q}_{1}^{\dag}}
		&= \frac{-1}{(\alpha_{1} + \alpha_{2})(\alpha_{1} + \beta_{1})}\cdot
		\frac{\alpha_{1}(\beta_{2} - \kappa) \mathbf{q}_{2}^{\dag} \mathbf{p}_{1} \mathbf{q}_{1}^{\dag} \mathbf{p}_{2} 
		 - \alpha_{2} \beta_{1}\mathbf{q}_{1}^{\dag} \mathbf{p}_{1} \mathbf{q}_{2}^{\dag} \mathbf{p}_{2}}{
		\mathbf{q}_{1}^{\dag}\mathbf{p}_{1} \mathbf{q}_{2}^{\dag} \mathbf{p}_{1} \mathbf{q}_{1}^{\dag} \mathbf{p}_{2}}\\
		[\mathbf{Q}_{1}^{\dag}]_{\mathbf{p}_{1}}[\mathbf{P}_{2}]_{\mathbf{q}_{2}^{\dag}}
		&= \frac{1}{(\beta_{1} + \beta_{2})(\alpha_{1} + \beta_{1})}\cdot
		\frac{\alpha_{1} \beta_{2}  \mathbf{q}_{2}^{\dag} \mathbf{p}_{1} \mathbf{q}_{1}^{\dag} \mathbf{p}_{2} 
		 -(\alpha_{2} - \kappa) \beta_{1}\mathbf{q}_{1}^{\dag} \mathbf{p}_{1} \mathbf{q}_{2}^{\dag} \mathbf{p}_{2} }{
		\mathbf{q}_{1}^{\dag}\mathbf{p}_{1} \mathbf{q}_{2}^{\dag} \mathbf{p}_{1} \mathbf{q}_{2}^{\dag} \mathbf{p}_{2}},		
	\end{align}
	where $\kappa = \alpha_{1} + \alpha_{2} + \beta_{1} + \beta_{2}$, gives
	\begin{align}
		\alpha_{4} &= - \beta_{2} + 
		\kappa\frac{\alpha_{1}\beta_{2} \mathbf{q}_{2}^{\dag} \mathbf{p}_{1} \mathbf{q}_{1}^{\dag} \mathbf{p}_{2} }{
		\alpha_{1} \beta_{2} \mathbf{q}_{2}^{\dag} \mathbf{p}_{1}  \mathbf{q}_{1}^{\dag} \mathbf{p}_{2}- 
		 \alpha_{2} \beta_{1} \mathbf{q}_{1}^{\dag} \mathbf{p}_{1} \mathbf{q}_{2}^{\dag} \mathbf{p}_{2}
		},\\
		\beta_{4} &= - \alpha_{2} + 
		\kappa\frac{ - \alpha_{2}\beta_{1} \mathbf{q}_{1}^{\dag} \mathbf{p}_{1} \mathbf{q}_{2}^{\dag} \mathbf{p}_{2} }{
		\alpha_{1} \beta_{2} \mathbf{q}_{2}^{\dag} \mathbf{p}_{1}  \mathbf{q}_{1}^{\dag} \mathbf{p}_{2}- 
		 \alpha_{2} \beta_{1} \mathbf{q}_{1}^{\dag} \mathbf{p}_{1} \mathbf{q}_{2}^{\dag} \mathbf{p}_{2}},
	\end{align}
	and so it is easy to see that $\alpha_{4} + \beta_{4} = \kappa - \alpha_{2} - \beta_{2} = \alpha_{1} + \beta_{1}$,
	as required. Similarly, using 
	\begin{align}
		[\mathbf{Q}_{2}^{\dag}]_{\mathbf{p}_{2}}[\mathbf{P}_{1}]_{\mathbf{q}_{1}^{\dag}}
		&= \frac{1}{(\alpha_{1} + \alpha_{2})(\alpha_{2} + \beta_{2})}\cdot
		\frac{\alpha_{1}\beta_{2} \mathbf{q}_{2}^{\dag} \mathbf{p}_{1} \mathbf{q}_{1}^{\dag} \mathbf{p}_{2} 
		 - \alpha_{2} (\beta_{1} - \kappa)\mathbf{q}_{1}^{\dag} \mathbf{p}_{1} \mathbf{q}_{2}^{\dag} \mathbf{p}_{2}}{
		\mathbf{q}_{1}^{\dag}\mathbf{p}_{1} \mathbf{q}_{1}^{\dag} \mathbf{p}_{2} \mathbf{q}_{2}^{\dag} \mathbf{p}_{2}}\\
		[\mathbf{Q}_{2}^{\dag}]_{\mathbf{p}_{2}}[\mathbf{P}_{2}]_{\mathbf{q}_{2}^{\dag}}
		&= \frac{-1}{(\beta_{1} + \beta_{2})(\alpha_{2} + \beta_{2})}\cdot
		\frac{(\alpha_{1} - \kappa)\beta_{2}\mathbf{q}_{2}^{\dag} \mathbf{p}_{1} \mathbf{q}_{1}^{\dag} \mathbf{p}_{2} 
		 - \alpha_{2} \beta_{1}  \mathbf{q}_{1}^{\dag} \mathbf{p}_{1} \mathbf{q}_{2}^{\dag} \mathbf{p}_{2} }{
		\mathbf{q}_{2}^{\dag}\mathbf{p}_{1} \mathbf{q}_{1}^{\dag} \mathbf{p}_{2} \mathbf{q}_{2}^{\dag} \mathbf{p}_{2}},		
	\end{align}
	and the equations at the node $\mathbf{Q}_{2}^{\dag}$ gives 
	\begin{align}
		\alpha_{5} &= - \beta_{1} + 
		\kappa \frac{ - \alpha_{2}\beta_{1} \mathbf{q}_{1}^{\dag} \mathbf{p}_{1} \mathbf{q}_{2}^{\dag} \mathbf{p}_{2} }{
		\alpha_{1} \beta_{2} \mathbf{q}_{2}^{\dag} \mathbf{p}_{1}  \mathbf{q}_{1}^{\dag} \mathbf{p}_{2}- 
		 \alpha_{2} \beta_{1} \mathbf{q}_{1}^{\dag} \mathbf{p}_{1} \mathbf{q}_{2}^{\dag} \mathbf{p}_{2}},\\
		\beta_{5} &= - \alpha_{1} + 
		\kappa\frac{\alpha_{1}\beta_{2} \mathbf{q}_{2}^{\dag} \mathbf{p}_{1} \mathbf{q}_{1}^{\dag} \mathbf{p}_{2} }{
		\alpha_{1} \beta_{2} \mathbf{q}_{2}^{\dag} \mathbf{p}_{1}  \mathbf{q}_{1}^{\dag} \mathbf{p}_{2}- 
		 \alpha_{2} \beta_{1} \mathbf{q}_{1}^{\dag} \mathbf{p}_{1} \mathbf{q}_{2}^{\dag} \mathbf{p}_{2}
		}.
	\end{align}
	Similar computations show that 
	equations at the nodes $\mathbf{P}_{1}$ and $\mathbf{P}_{2}$ give the same values of the parameters, and the 
	other constraints are easy to check.
\end{proof}

\begin{lemma}[The \emph{Edge} Lemma]\label{lem:edge-lemma} Let $(\mathbf{p}_{1}, \mathbf{p}_{2}, \mathbf{p}_{3})$
	be a $\mathbf{q}^{\dag}_{3}$-triple with parameters $(\alpha_{1},\alpha_{2},\alpha_{3})$, 
	$(\mathbf{q}^{\dag}_{1},\mathbf{q}^{\dag}_{2},\mathbf{q}^{\dag}_{3})$ be a $\mathbf{p}_{3}$-triple
	with parameters $(\beta_{1},\beta_{2},\beta_{3})$ with parameters $(\beta_{1}, \beta_{2}, \beta_{3})$, and let 
	$\alpha_{3} = \beta_{3}$,
	\begin{center}
		\includegraphics[height=1.2in]{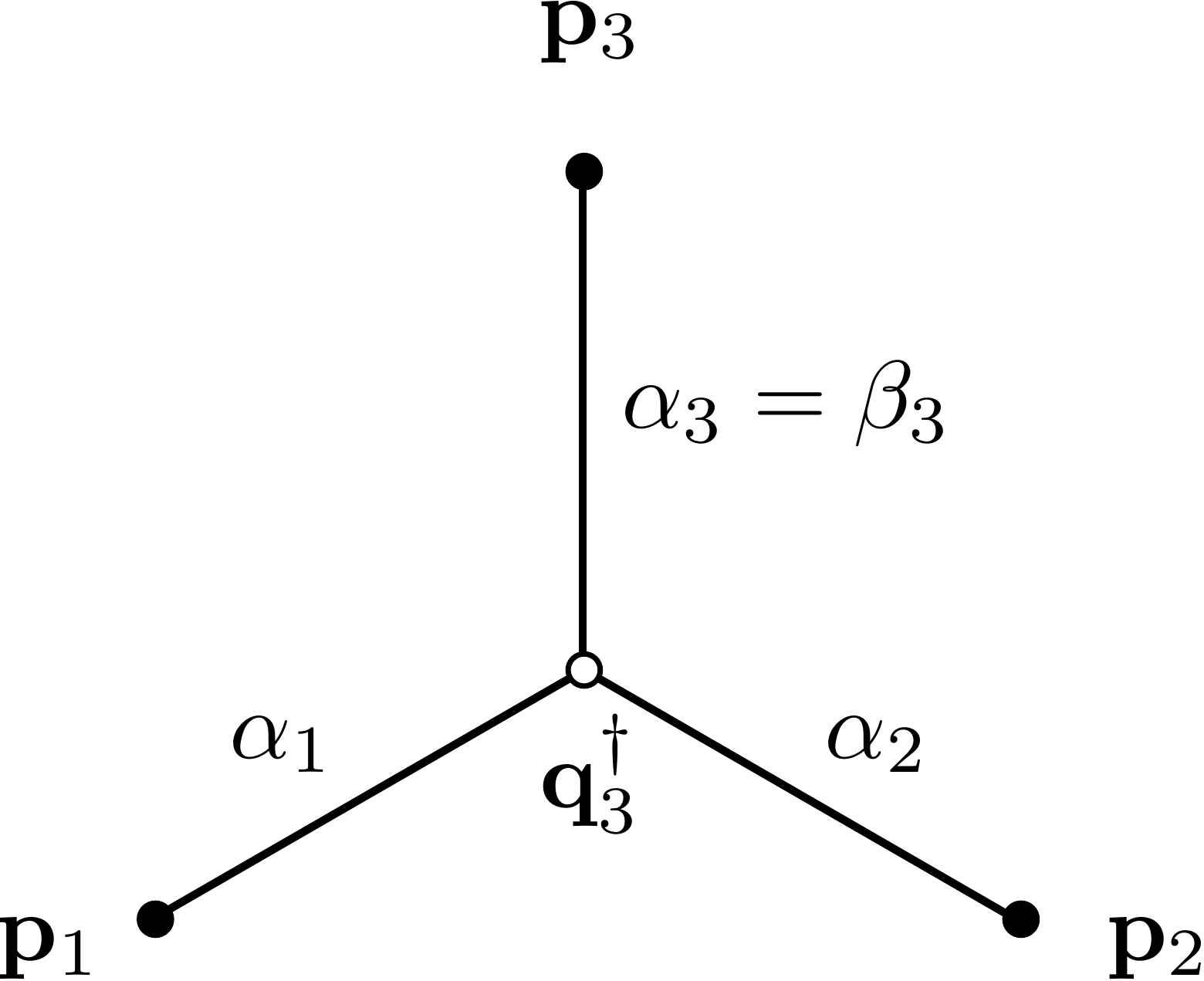}\qquad \includegraphics[height=1.2in]{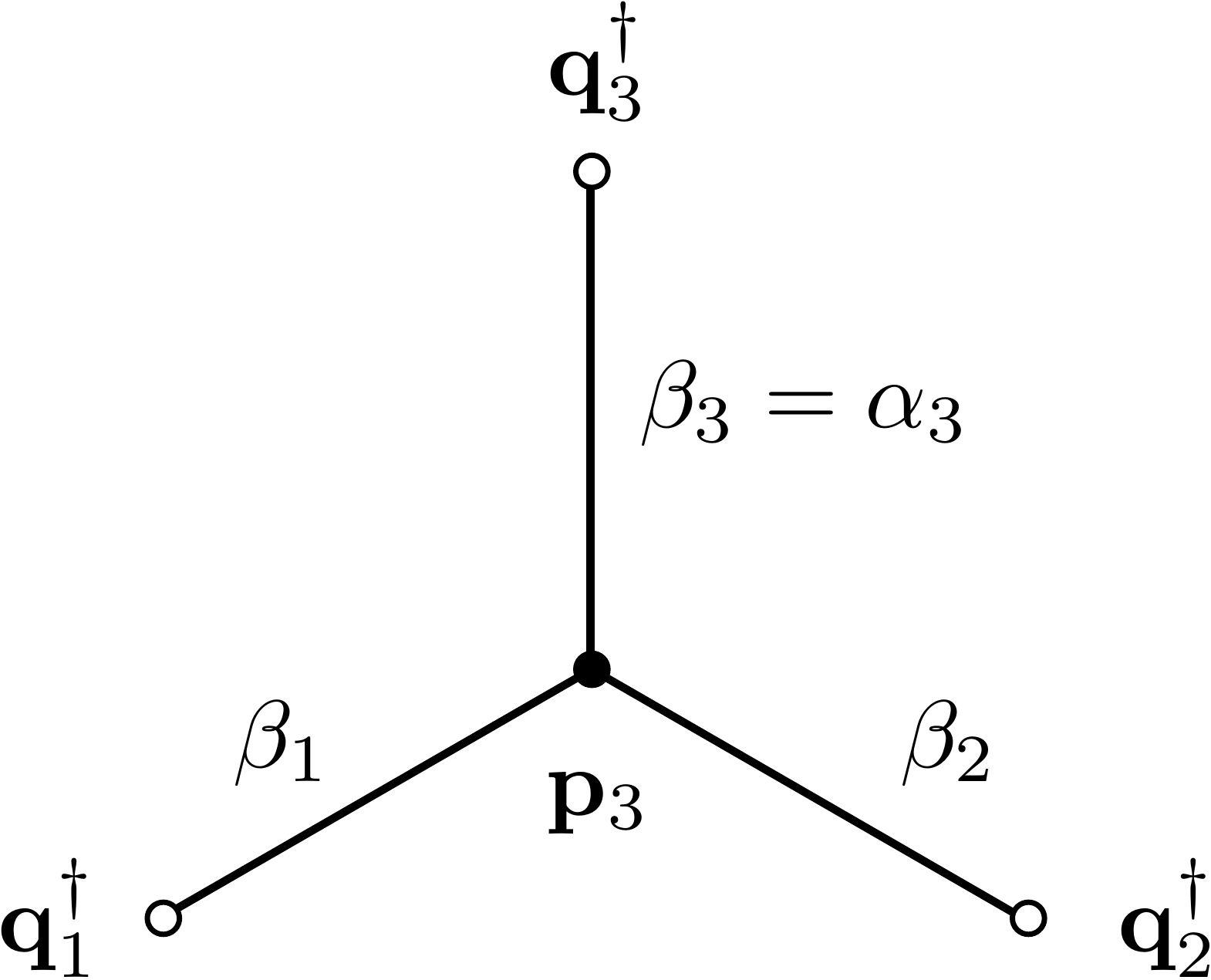}.
	\end{center}
	Then it is possible to glue the triples along the $\mathbf{p}_{3} \mathbf{q}^{\dag}_{3}$-edge and, by assigning
	parameters $\gamma_{1}$ and $\gamma_{2}$ along the $\mathbf{p}_{1}\mathbf{q}^{\dag}_{1}$ and 
	$\mathbf{p}_{2}\mathbf{q}^{\dag}_{2}$ edges so that the sum of parameters in the new quads is zero (i.e.,
	$\gamma_{1} = \alpha_{2} - \beta_{1}$ and $\gamma_{2} = \alpha_{1} - \beta_{2}$), complete the resulting
	``butterfly'' configuration to a consistent cube,
	\begin{center}
		\includegraphics[height=1.5in]{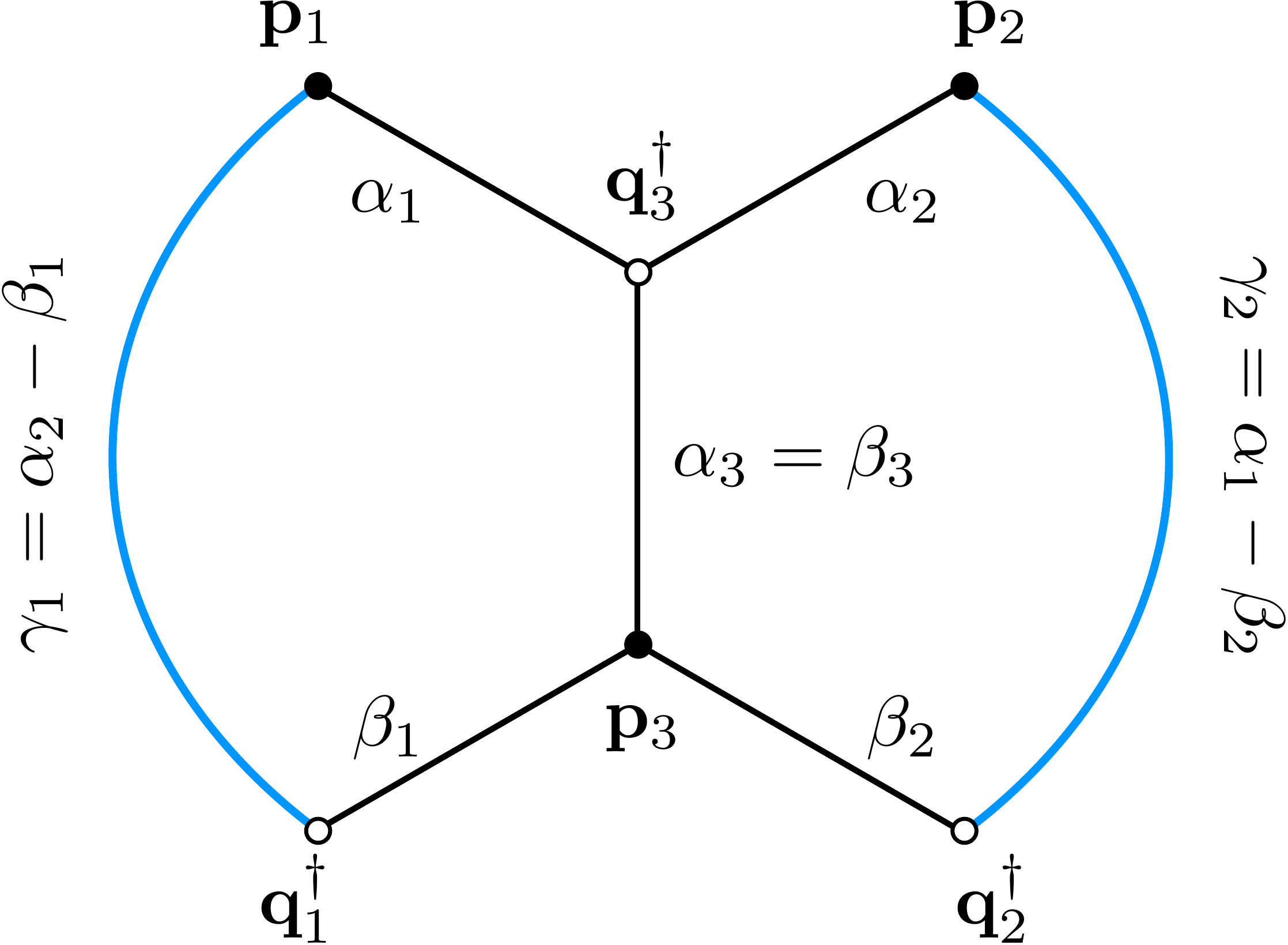}\qquad \includegraphics[height=1.5in]{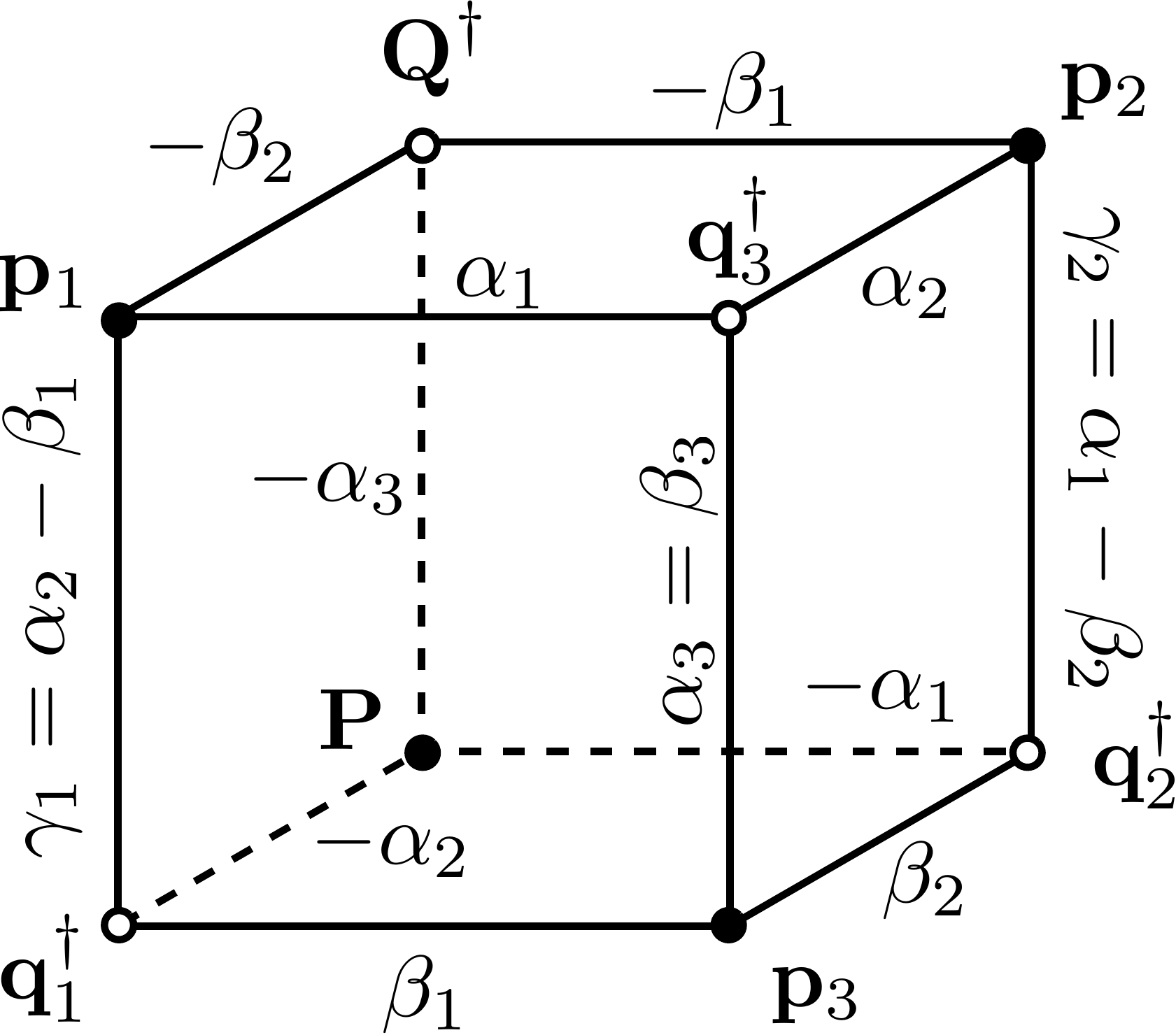}.
	\end{center}
\end{lemma}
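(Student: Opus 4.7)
The strategy is to reduce the Edge Lemma to the Face Lemma (Lemma~\ref{lem:face-lemma}). After gluing the two triples along the common edge $\mathbf{p}_3 \mathbf{q}_3^\dag$ and assigning $\gamma_1 = \alpha_2 - \beta_1$ to the edge $\mathbf{p}_1 \mathbf{q}_1^\dag$ and $\gamma_2 = \alpha_1 - \beta_2$ to the edge $\mathbf{p}_2 \mathbf{q}_2^\dag$, the butterfly decomposes into two quadrilaterals that share the $\mathbf{p}_3 \mathbf{q}_3^\dag$-edge: the quad $Q_1 = (\mathbf{p}_3, \mathbf{q}_1^\dag, \mathbf{p}_1, \mathbf{q}_3^\dag)$ with edge parameters $(\beta_1, \gamma_1, \alpha_1, \alpha_3)$, and the quad $Q_2 = (\mathbf{p}_3, \mathbf{q}_2^\dag, \mathbf{p}_2, \mathbf{q}_3^\dag)$ with edge parameters $(\beta_2, \gamma_2, \alpha_2, \alpha_3)$. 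A direct check confirms that each satisfies the cyclic sum-to-zero condition required by Lemma~\ref{lem:face-lemma}: for $Q_1$ one has $\beta_1 + \gamma_1 + \alpha_1 + \alpha_3 = \alpha_1 + \alpha_2 + \alpha_3 = 0$, using the definition of $\gamma_1$ and the first triple relation; the verification for $Q_2$ is analogous and additionally uses $\alpha_3 = \beta_3$.

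Next I would apply Lemma~\ref{lem:face-lemma} to $Q_1$ as the initial face. The lemma then constructs a consistent cube with four outside vertices and outside edge parameters, and guarantees triple conditions at all eight corners. The crux of the argument is to identify two of these four newly constructed outside vertices with the remaining butterfly vertices $\mathbf{p}_2$ and $\mathbf{q}_2^\dag$. By the explicit formulas of Lemma~\ref{lem:face-lemma}, the outside vertex completing the $\mathbf{q}_3^\dag$-based triple $(\mathbf{p}_3, \mathbf{p}_1, \cdot)$ with parameters $(\alpha_3, \alpha_1, \alpha_2)$ must equal $\mathbf{p}_2$, since the latter is uniquely characterized by the first butterfly triple. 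Dually, the outside vertex completing the $\mathbf{p}_3$-based triple $(\mathbf{q}_1^\dag, \mathbf{q}_3^\dag, \cdot)$ with parameters $(\beta_1, \beta_3, \beta_2)$ must equal $\mathbf{q}_2^\dag$ by the second butterfly triple, where $\alpha_3 = \beta_3$ is used to match the shared edge parameter. The remaining two outside vertices of the Face-Lemma cube are then genuinely new and serve as the vectors completing the butterfly into the cube.

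Finally, I would verify consistency on the other glued quad $Q_2$. The Face Lemma assigns a specific parameter to the $\mathbf{p}_2 \mathbf{q}_2^\dag$ edge; after translation of notation this parameter equals $\beta_1 - \alpha_2$, which agrees with the prescribed $\gamma_2 = \alpha_1 - \beta_2$ precisely because $\beta_1 + \beta_2 = -\beta_3 = -\alpha_3 = \alpha_1 + \alpha_2$. The main obstacle is not any geometric difficulty but rather the careful bookkeeping of notation between the Edge Lemma and the Face Lemma, together with the identification of the Face Lemma's constructed outside vertices with the preexisting butterfly vertices $\mathbf{p}_2$ and $\mathbf{q}_2^\dag$; both matchings hinge critically on the compatibility condition $\alpha_3 = \beta_3$ that links the two triples along their shared edge.
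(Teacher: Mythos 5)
Your proof is correct and takes essentially the same route as the paper, whose entire argument is the single remark that, after choosing $\gamma_{1}$ and $\gamma_{2}$ by the stated rule, the result follows immediately from the Face Lemma~\ref{lem:face-lemma}. Your write-up merely supplies the bookkeeping the paper leaves implicit (the cyclic conditions on the two glued quads, the identification of two of the Face Lemma's outside vertices with $\mathbf{p}_{2}$ and $\mathbf{q}_{2}^{\dag}$, and the parameter match $\beta_{1}-\alpha_{2}=\alpha_{1}-\beta_{2}$ via $\alpha_{3}=\beta_{3}$), and all of these checks are accurate.
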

\begin{proof} After choosing $\gamma_{1}$ and $\gamma_{2}$ according to the above rule, 
	this follows immediately from the \emph{Face Lemma} \ref{lem:face-lemma}.
	
\end{proof}

\section{Eigenvectors of Quadratic Lax Matrices and Refactorization} 
\label{sec:eigenvectors_of_quadratic_lax_matrices_and_refactorization}

In this section we show that a cube diagram described in the previous section can be used to 
represent the relationship between eigenvectors of quadratic Lax matrices. Also, we  
show that the three axes of symmetry of the cube connecting the centers of opposing sides 
can be give the following interpretation. One of the axis corresponds to switching the 
coordinate system on the space from \emph{left} to \emph{right} divisors. The other two axes
correspond to two different directions of the refactorization dynamics corresponding to two possible pairings
between two zeroes and two poles of $\det \mathbf{L}(z)$. For each of this transformations
we produce the explicit formula for the generating function using the labeling of the cube.

\subsection{Relations Between Eigenvectors of Quadratic Lax Matrices} 
\label{sub:relations_between_eigenvectors_of_quadratic_lax_matrices}

Consider a quadratic Lax matrix $\mathbf{L}(z)$ given in the additive form as 
\begin{align}
	\mathbf{L}(z) &= \mathbf{L}_{0} + \frac{ \mathbf{a}_{1} \mathbf{b}_{1}^{\dag} }{ z - z_{1} } + 
	\frac{ \mathbf{a}_{2} \mathbf{b}_{2}^{\dag} }{ z - z_{2} },\quad
	\mathbf{L}(z)^{-1} = \mathbf{L}_{0}^{-1} - \frac{ \mathbf{c}_{\alpha} \mathbf{d}_{\alpha}^{\dag} }{ z - \zeta_{\alpha} }
	- \frac{ \mathbf{c}_{\beta} \mathbf{d}_{\beta}^{\dag} }{ z - \zeta_{\beta} },\\
	\det \mathbf{L}(z) &= \det \mathbf{L}_{0} \frac{ (z - \zeta_{\alpha})(z - \zeta_{\beta}) }{ (z - z_{1})(z- z_{2}) }.
\end{align}

Pairing $z_{1}$ with $\zeta_{\alpha} = \zeta_{1}$ and $z_{2}$ with $\zeta_{\beta} = \zeta_{2}$ we can write $\mathbf{L}(z)$ and 
$\mathbf{L}^{-1}(z)$ in the multiplicative form using factors 
$\mathbf{B}_{i}(z) = \mathbf{I} + \frac{ z - \zeta_{i} }{ z - z_{i} } \frac{ \mathbf{p}_{i} \mathbf{q}_{i}^{\dag} }{ \mathbf{q}_{i}^{\dag} \mathbf{p}_{i} }$,
\begin{equation}
	\mathbf{L}(z) = \mathbf{L}_{0} \mathbf{B}_{1}(z) \mathbf{B}_{2}(z),\qquad \mathbf{L}(z)^{-1} = \mathbf{B}_{2}(z)^{-1} \mathbf{B}_{1}(z)^{-1}\mathbf{L}_{0}^{-1},
\end{equation}
and so $\mathbf{B}_{1}(z) = \mathbf{L}_{0}^{-1} \mathbf{B}_{1}^{l}(z) \mathbf{L}_{0} = \mathbf{I} + \frac{ z_{1} - \zeta_{\alpha} }{ z - z_{1} } 
\frac{ \mathbf{L}_{0}^{-1} \mathbf{a}_{1} \mathbf{d}_{\alpha}^{\dag} \mathbf{L}_{0}}{ \mathbf{d}_{\alpha}^{\dag} \mathbf{a}_{1} }$ 
and $\mathbf{B}_{2}(z) = \mathbf{B}_{2}^{r}(z) = \mathbf{I} + \frac{ z_{2} - \zeta_{\beta} }{ z - z_{2} } 
\frac{ \mathbf{c}_{\beta} \mathbf{b}_{2}^{\dag}}{ \mathbf{b}_{2}^{\dag} \mathbf{c}_{\beta} }$. This can also be seen directly by looking at 
the residues of $\mathbf{L}(z)$ at $z_{1}$ and $z_{2}$ and of $\mathbf{L}(z)^{-1}$ at $\zeta_{\alpha}$ and $\zeta_{\beta}$; in fact, that is 
how the expressions for the right and left divisors are obtained. The same residues
also give the following collection of equations:
\begin{equation}
	\mathbf{d}_{\alpha}^{\dag} \mathbf{L}_{0} \mathbf{B}_{2}(z_{1}) \sim \mathbf{b}_{1}^{\dag},\quad
	\mathbf{B}_{1}(z_{2}) \mathbf{c}_{\beta} \sim \mathbf{L}_{0}^{-1} \mathbf{a}_{2},\quad
	\mathbf{B}_{2}(\zeta_\alpha) \mathbf{c}_{\alpha} \sim \mathbf{L}_{0}^{-1} \mathbf{a}_{1},\quad
	\mathbf{d}_{\beta}^{\dag} \mathbf{L}_{0} \mathbf{B}_{1}(\zeta_{\beta}) \sim \mathbf{b}_{2}^{\dag}.
\end{equation}
Representing these equations using elementary triples and gluing, we get the following result.

\begin{theorem}
	\label{thm:refactor-2-mult}
	The eigenvectors of $\mathbf{L}(z)$ are related by the following cube diagram:
	\begin{center}
		\includegraphics[width=3in]{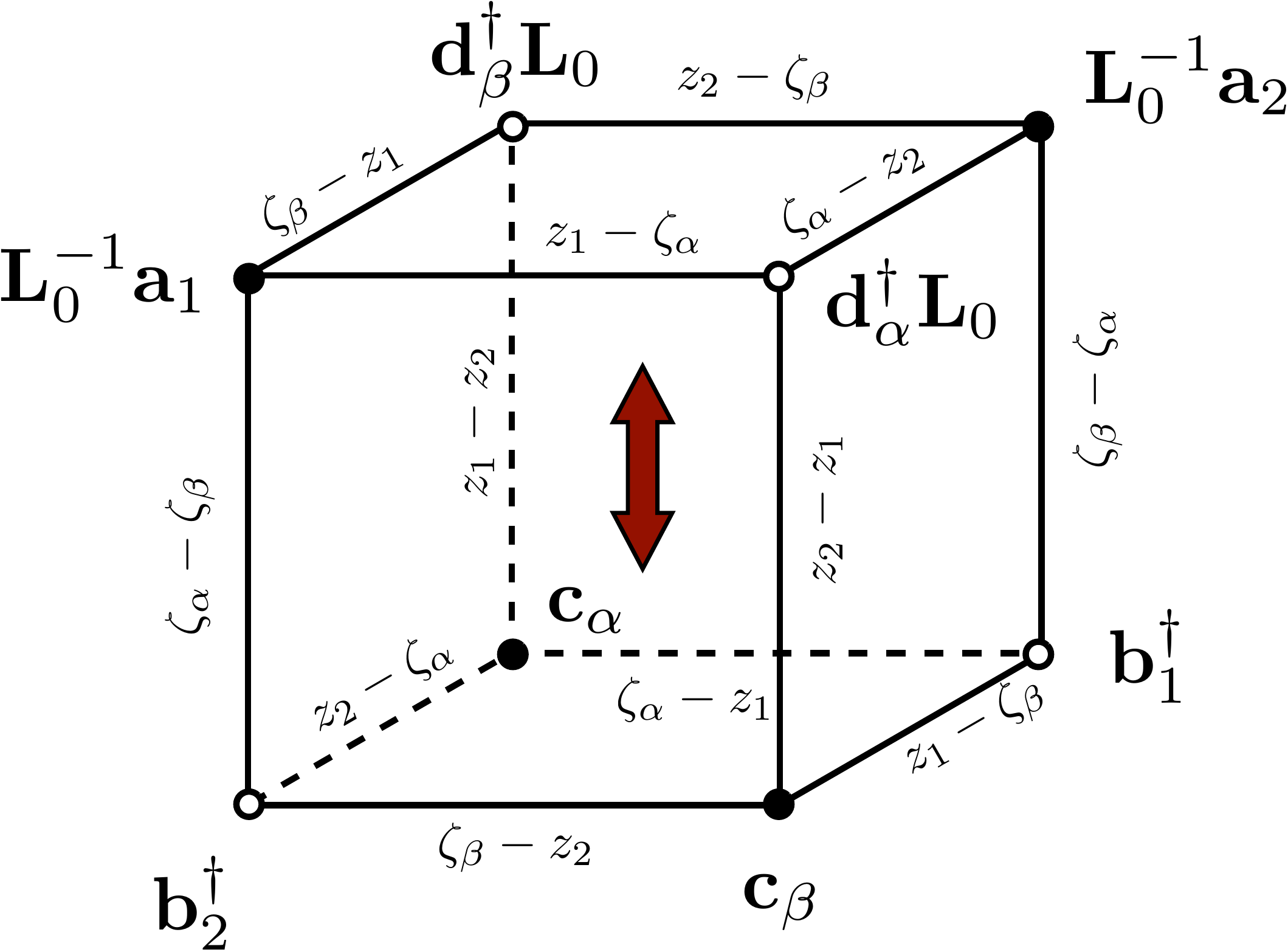}.
	\end{center}
\end{theorem}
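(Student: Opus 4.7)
The plan is to realize each of the four eigenvector relations in the displayed system just before the theorem as an elementary triple via Lemma~\ref{lem:el-div-triples}(iii), and then to assemble these four triples into the promised cube by invoking the Edge Lemma~\ref{lem:edge-lemma} (equivalently, the Face Lemma~\ref{lem:face-lemma}). Recall that an equation of the form $\mathbf{v} \sim \mathbf{B}(z^{*})\mathbf{w}$, where $\mathbf{B}(z;z_{0},\zeta_{0})$ has eigenvectors $\mathbf{p},\mathbf{q}^{\dag}$, yields a $\mathbf{q}^{\dag}$-based triple $(\mathbf{w},\mathbf{p},\mathbf{v})$ with edge parameters $(z^{*}-z_{0},\, z_{0}-\zeta_{0},\, \zeta_{0}-z^{*})$, and dually for $\mathbf{v}^{\dag} \sim \mathbf{w}^{\dag}\mathbf{B}(z^{*})$.

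Carrying this out, I would first unwind each of the four relations explicitly. For example, $\mathbf{B}_{2}(\zeta_{\alpha})\mathbf{c}_{\alpha} \sim \mathbf{L}_{0}^{-1}\mathbf{a}_{1}$ produces a $\mathbf{b}_{2}^{\dag}$-based triple $(\mathbf{c}_{\alpha},\mathbf{c}_{\beta},\mathbf{L}_{0}^{-1}\mathbf{a}_{1})$ with parameters $(\zeta_{\alpha}-z_{2},\, z_{2}-\zeta_{\beta},\, \zeta_{\beta}-\zeta_{\alpha})$, while $\mathbf{d}_{\alpha}^{\dag}\mathbf{L}_{0}\mathbf{B}_{2}(z_{1}) \sim \mathbf{b}_{1}^{\dag}$ produces a $\mathbf{c}_{\beta}$-based dual triple $(\mathbf{d}_{\alpha}^{\dag}\mathbf{L}_{0},\mathbf{b}_{2}^{\dag},\mathbf{b}_{1}^{\dag})$ with parameters $(z_{1}-z_{2},\, z_{2}-\zeta_{\beta},\, \zeta_{\beta}-z_{1})$. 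The other two equations, involving $\mathbf{B}_{1}(z_{2})$ and $\mathbf{B}_{1}(\zeta_{\beta})$, yield two analogous triples based at $\mathbf{d}_{\alpha}^{\dag}\mathbf{L}_{0}$ and $\mathbf{L}_{0}^{-1}\mathbf{a}_{1}$ respectively. Between them, these four triples use exactly the eight eigenvectors $\mathbf{L}_{0}^{-1}\mathbf{a}_{1},\, \mathbf{L}_{0}^{-1}\mathbf{a}_{2},\, \mathbf{b}_{1}^{\dag},\, \mathbf{b}_{2}^{\dag},\, \mathbf{c}_{\alpha},\, \mathbf{c}_{\beta},\, \mathbf{d}_{\alpha}^{\dag}\mathbf{L}_{0},\, \mathbf{d}_{\beta}^{\dag}\mathbf{L}_{0}$ that will sit at the eight vertices of the cube.

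Next I would match the triples pairwise so that the Edge Lemma can be applied. Each pair of triples taken from ``opposite'' equations shares the feature that the center of one is an endpoint of the other with the same edge parameter. For instance, the $\mathbf{c}_{\beta}$-based triple above has center $\mathbf{b}_{2}^{\dag}$ with parameter $z_{2}-\zeta_{\beta}$, and the $\mathbf{b}_{2}^{\dag}$-based triple has center $\mathbf{c}_{\beta}$ with the same parameter $z_{2}-\zeta_{\beta}$, so the hypothesis $\alpha_{3}=\beta_{3}$ of Lemma~\ref{lem:edge-lemma} is met. Gluing each such pair completes the corresponding ``butterfly'' into a cube face, and doing this for all four matched pairs builds the full cube. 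The additional cyclic hypothesis $\kappa=0$ required by Lemma~\ref{lem:face-lemma} is automatic on every face, since each face of the cube collects four edge labels that are telescoping differences drawn from the four points $z_{1},z_{2},\zeta_{\alpha},\zeta_{\beta}$ and therefore sum to zero.

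The main obstacle is purely the bookkeeping step of the previous paragraph: one must orient the triples so that shared edges carry identical, not opposite, parameters, and so that the designation of ``based'' vectors alternates appropriately between vertices in $\mathbf{V}$ and in $\mathbf{V}^{\dag}$ as one walks around the cube. Once this orientation check is done, the existence and labelling of the cube follow mechanically from Lemma~\ref{lem:el-div-triples}(iii) combined with the Edge and Face Lemmas, and the resulting labelled diagram is precisely the one asserted by the theorem.
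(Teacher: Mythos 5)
Your proposal is correct and follows exactly the route the paper intends: the paper offers no more proof than the sentence "Representing these equations using elementary triples and gluing, we get the following result," and your argument fills in precisely those details — each of the four proportionality relations becomes a triple via Lemma~\ref{lem:el-div-triples}(iii) (your explicit triples and parameters check out), the four base vertices form a face, and the Edge/Face Lemmas complete the cube, with the cyclic condition holding automatically by telescoping. The orientation issue you flag is real but resolvable (e.g., negating the parameters of the $\mathbf{b}_{2}^{\dag}$- and $\mathbf{c}_{\beta}$-based triples makes all shared edges match), so nothing is missing.
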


Note that in this diagram, the \emph{top} (resp.~\emph{bottom}) faces of the cube correspond to parameterizing the space 
of quadratic Lax matrices by \emph{left} (resp.~\emph{right}) divisors, and so the vertical axis of the cube correspond to 
the change of coordinates between these two different systems. Two horizontal axes correspond to the refactorization dynamics.

\subsection{Refactorization Dynamics} 
\label{sub:refactorization_dynamics}
Consider now an isospectral dynamic $\mathbf{L}(z)\mapsto \tilde{\mathbf{L}}(z) = \mathbf{R}(z) \mathbf{L}(z) \mathbf{R}(z)^{-1}$,
the isomonodromic case is similar. If we take $\mathbf{R}(z) = \mathbf{B}_{2}(z) =\mathbf{B}^{r}_{\beta;2}(z)$, where the notation
$\mathbf{B}^{r}_{\beta;2}(z)$ explicitly specifies the zero and the pole of the elementary divisor, this becomes a refactorization transformation
\begin{equation}
	\mathbf{L}(z) = \mathbf{L}_{0} \mathbf{B}_{\alpha;1}(z) \mathbf{B}_{\beta;2}(z) \mapsto 
	\tilde{\mathbf{L}}(z) = \mathbf{B}_{\beta;2}(z) \mathbf{L}_{0} \mathbf{B}_{\alpha;1}(z) = 
	\mathbf{L}_{0} \tilde{\mathbf{B}}_{\alpha;1}(z) \tilde{\mathbf{B}}_{\beta;2}(z).
\end{equation}

Since we can think of the refactorization transformation as switching the roles of the left and right divisors,
$\mathbf{B}_{\beta,2}(z)  = \mathbf{B}_{\beta,2}^{r}(z) = \tilde{\mathbf{B}}_{\beta,2}^{l}(z)$ and 
$\mathbf{B}_{\alpha,1}(z) = \mathbf{L}_{0}^{-1} \mathbf{B}_{\alpha,1}^{l}(z) \mathbf{L}_{0} = \tilde{\mathbf{B}}_{\alpha,1}^{r}(z)$,
using (\ref{eq:left-right-divs}) we get the identifications $\tilde{\mathbf{a}}_{2} = \mathbf{c}_{\beta}$,
$\tilde{\mathbf{d}}_{\beta}^{\dag} = \mathbf{b}_{2}$, $\tilde{\mathbf{c}}_{\alpha} = \mathbf{L}_{0}^{-1}\mathbf{a}_{1}$, and 
$\tilde{\mathbf{b}}^{\dag}_{1} = \mathbf{d}_{\alpha}^{\dag} \mathbf{L}_{0}$. Thus, on the diagram below labels on the back face
correspond to the coordinates of $\mathbf{L}(z)$, labels on the front face correspond (after twisting $\tilde{\mathbf{a}}_{2}$ by
$\mathbf{L}_{0}^{-1}$ and $\tilde{\mathbf{d}}_{\beta}^{\dag}$ by $\mathbf{L}_{0}$) to the coordinates of $\tilde{\mathbf{L}}(z)$,
	\begin{center}
		\includegraphics[width=3in]{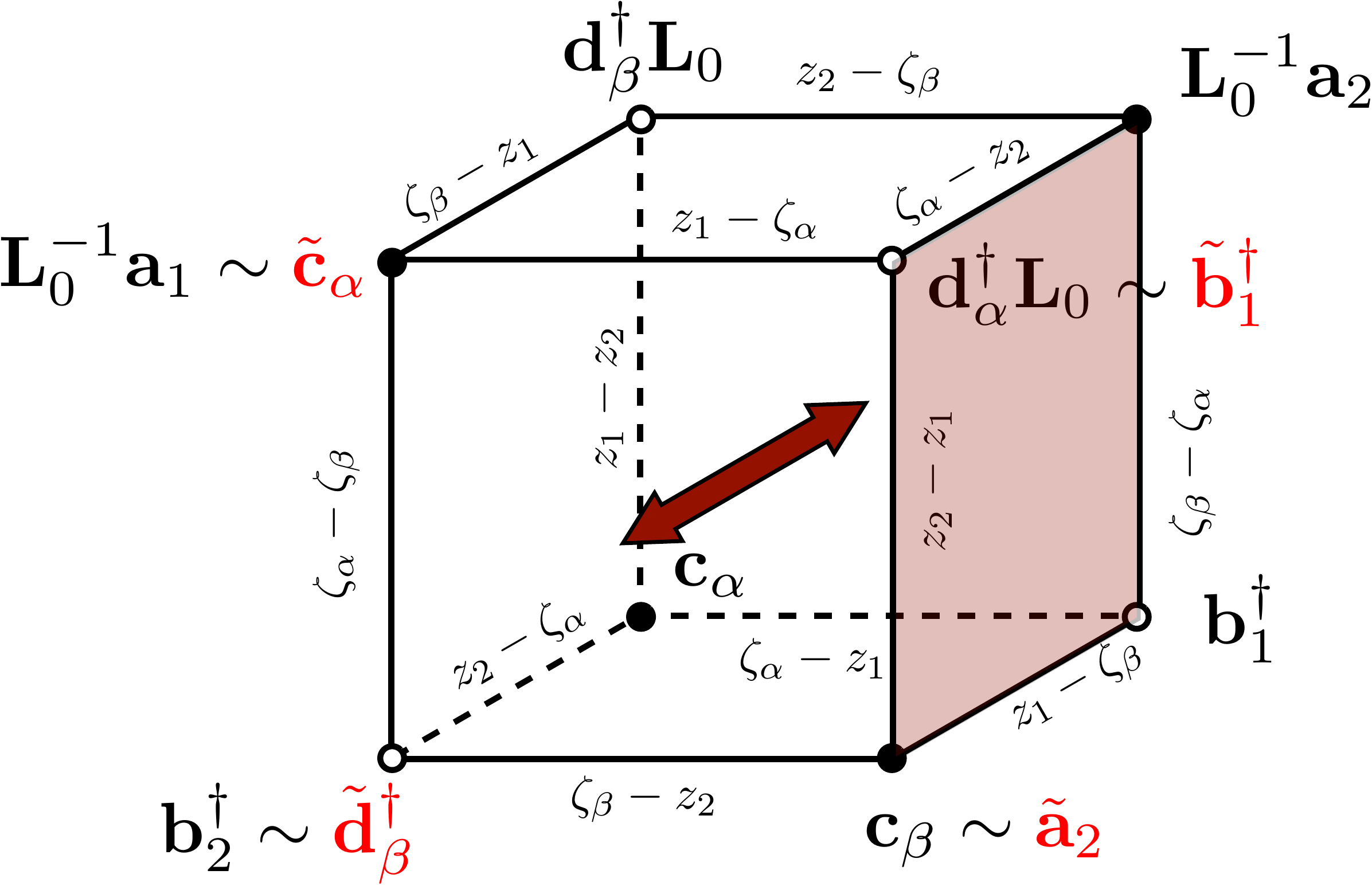},
	\end{center}
and the generating function encoded by the \emph{right} face of the cube is, up to some minor change of notation, 
exactly the Lagrangian from Theorem~(\ref{thm:Lagrangian}),
\begin{align}
	\mathcal{L}((\mathbf{a}_{2},\mathbf{b}^{\dag}_{1}),
	\widetilde{(\mathbf{a}_{2},\mathbf{b}^{\dag}_{1})}) &= 
	(z_{2} - z_{1})\log(\tilde{\mathbf{b}}^{\dag}_{1} \tilde{\mathbf{a}}_{2}) + 
	(z_{1} - \zeta_{\beta})\log(\mathbf{b}^{\dag}_{1} \tilde{\mathbf{a}}_{2}) \\
	&\qquad +
	(\zeta_{\beta} - \zeta_{\alpha}) \log(\mathbf{b}^{\dag}_{1} \mathbf{L}_{0}^{-1}\mathbf{a}_{2}) + 
	(\zeta_{\alpha} - z_{2}) \log(\tilde{\mathbf{b}}^{\dag}_{1} \mathbf{L}_{0}^{-1}\mathbf{a}_{2}).\notag
\end{align}
Note that using the generating function given by the \emph{left} face of the cube corresponds to the backwards motion generated by 
$\mathbf{R}(z) = \mathbf{B}^{r}_{\alpha;1}(z)$.

The remaining axis of the cube corresponds to pairing $z_{1}$ with $\zeta_{\beta}$ and $z_{2}$ with $\zeta_{\alpha}$. In other words, 
taking $\mathbf{R}(z) = \mathbf{B}^{r}_{\alpha;2}(z)$ (or $\mathbf{R}(z) = \mathbf{B}^{r}_{\beta;1}(z)$ for the backwards motion),
\begin{equation}
	\mathbf{L}(z) = \mathbf{L}_{0} \mathbf{B}_{\beta;1}(z) \mathbf{B}_{\alpha;2}(z) \mapsto 
	\tilde{\mathbf{L}}(z) = \mathbf{B}_{\alpha;2}(z) \mathbf{L}_{0} \mathbf{B}_{\beta;1}(z) = 
	\mathbf{L}_{0} \tilde{\mathbf{B}}_{\beta;1}(z) \tilde{\mathbf{B}}_{\alpha;2}(z),
\end{equation}
we get the following map
	\begin{center}
		\includegraphics[width=3in]{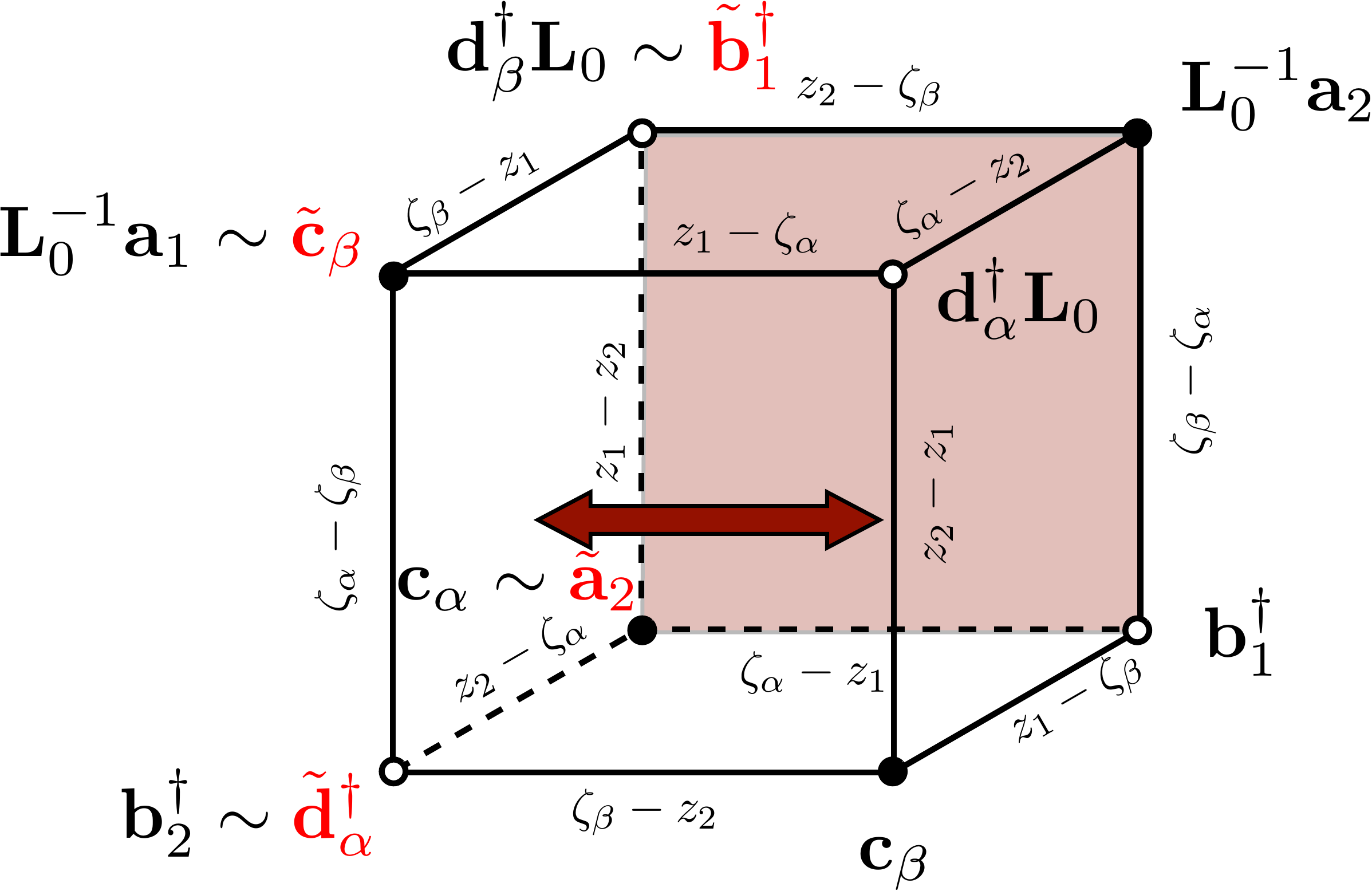}.
	\end{center}
and its Lagrangian generating function
\begin{align}
	\mathcal{L}((\mathbf{a}_{2},\mathbf{b}^{\dag}_{1}),
	\widetilde{(\mathbf{a}_{2},\mathbf{b}^{\dag}_{1})}) &= 
	(z_{1} - z_{2})\log(\tilde{\mathbf{b}}^{\dag}_{1} \tilde{\mathbf{a}}_{2}) + 
	(\zeta_{\alpha} - z_{1})\log(\mathbf{b}^{\dag}_{1} \tilde{\mathbf{a}}_{2}) \\
	&\qquad +
	(\zeta_{\beta} - \zeta_{\alpha}) \log(\mathbf{b}^{\dag}_{1} \mathbf{L}_{0}^{-1}\mathbf{a}_{2}) + 
	(z_{2} - \zeta_{\beta}) \log(\tilde{\mathbf{b}}^{\dag}_{1} \mathbf{L}_{0}^{-1}\mathbf{a}_{2}).\notag
\end{align}



\section{Conclusions} 
\label{sec:conclusions}

We explained a neat and efficient way to encode the structure of refactorization transformations for quadratic Lax matrices and their generating
functions using cube diagrams. This approach also gives different
ways of choosing coordinate systems on the space of such matrices --- each choice corresponds to a face of the cube. 
It would be very interesting to see 
if this approach can be generalized to Lax matrices with more than two factors. Since, in view of  the gluing 
properties, cube diagrams are \emph{rigid}, we expect it to results in higher dimensional configurations relating 
such cubes, where each cube represents a particular transposition of factors. In particular, the case of three factors 
is related to the structure of Yang-Baxter maps and we plan to consider it in a separate publication.

\section{Acknowledgements} 
\label{sec:acknowledgements} The author thanks Adam Doliwa, Michael Gekhtman, and Yuri Suris for helpful conversations and suggestions.


\small
\bibliographystyle{amsalpha}


\providecommand{\bysame}{\leavevmode\hbox to3em{\hrulefill}\thinspace}
\providecommand{\MR}{\relax\ifhmode\unskip\space\fi MR }
\providecommand{\MRhref}[2]{%
  \href{http://www.ams.org/mathscinet-getitem?mr=#1}{#2}
}
\providecommand{\href}[2]{#2}

\end{document}